\documentclass[11pt,oneside]{article}
\usepackage{graphicx}
\usepackage{fullpage}
\usepackage{amsmath}
\usepackage{amssymb}
\usepackage{amsthm}
\usepackage{tikz}
\usetikzlibrary{shapes}
\usetikzlibrary{patterns}
\usetikzlibrary{arrows,positioning,decorations.pathmorphing,trees}
\usetikzlibrary{arrows.meta}
\usepackage{pgfplots}
\usepackage{float}
\usepackage{hyperref}

\usepackage{framed}
\usepackage[utf8]{inputenc}
\usepackage{graphicx,sidecap,tikz}
\usepackage{xcolor}
\definecolor{shadecolor}{gray}{0.9}

\newtheorem{theorem}{Theorem}[section]

\newtheorem{corollary}[theorem]{Corollary}

\newtheorem{lemma}[theorem]{Lemma}
\newtheorem{claim}[theorem]{Claim}

\newtheorem*{remark*}{Remark}
\newtheorem*{notation*}{Notation}
\newtheorem*{observation*}{Observation}
\newtheorem*{theorem*}{Theorem}
\newtheorem*{definition*}{Definition}
\newtheorem*{axiom*}{Axiom}
\newtheorem*{claim*}{Claim}
\newtheorem*{lemma*}{Lemma}

\title{The Price of Anarchy of\\[0.1cm] Two-Buyer Sequential Multiunit Auctions}
\author{Mete \c{S}eref Ahunbay \and Adrian Vetta}

\newcommand{\x}{{\bf x}}
\newcommand{\h}{\mathbb{H}}
\newcommand{\buyset}{\{1,2\}}
\newcommand{\e}{{\bf e}}
\newcommand{\sw}{\textsc{sw}}
\newcommand{\opt}{\textsc{opt}}
\newcommand{\0}{{\bf 0}}

\begin{document}

\maketitle

\begin{abstract}
We study the efficiency of sequential multiunit auctions with two-buyers and complete information. 
For general valuation functions, we show that the {\em price of anarchy} is exactly $1/T$ for auctions with $T$ items for sale. 
For concave valuation functions, we show that the 
price of anarchy is bounded below by $1-1/e\simeq 0.632$. This bound is asymptotically tight as the number of items 
sold tends to infinity.
\end{abstract}

\section{Introduction}
In a {\em sequential multiunit auction}, $T$ identical copies of an item are sold one at a time.
We evaluate the price of anarchy in two-buyer sequential multiunit auctions with complete information, under the standard model 
introduced by Gale and Stegeman~\cite{GS01}.
Our main result is that, for concave valuation functions, the price of anarchy is at least $1-1/e \simeq 0.632$, and this bound
is asymptotically tight as the number of items $T$ tends to infinity.
We also show that, for general valuation functions, the price of anarchy is exactly $1/T$ for sequential multiunit
auctions with $T$ items for sale.
To obtain these results we show how to lower bound the price of anarchy via a linear programming formulation.
Key to our analyses is a detailed examination of the properties of equilibria. These properties lead to a collection of valid constraints 
whose incorporation into the linear program produces the optimal lower bounds. The optimality of these bounds is certified by 
providing examples of two-buyer sequential auctions with matching upper bounds on the price of anarchy.

\subsection{Related Work}
There is an extensive literature studying the price of anarchy of sequential multiunit auctions. For the case of identical items,
our price of anarchy bound of $1-1/e$ for two-buyer auctions with concave valuations has previously been claimed by Bae et al. \cite{BBB08,BBB09}.
However, those papers contain flaws and the proofs do not hold; see~\cite{ALV20} for details.
Recently, Ahunbay et al.~\cite{ALV20} were able to prove that the price of anarchy is $1-1/e$ under the restriction 
that a buyer may not bid higher than its incremental value for winning the next item. But, even with concave valuations,
equilibrium bids can be higher than incremental values -- thus the results of~\cite{ALV20} do not apply to the traditional equilibrium
concept studied in this paper. 
The price of anarchy of sequential auctions with non-identical items has also been studied in depth; see, for example~\cite{FLS13,PST12,ST13,ST12}. 

To evaluate the price of anarchy we apply primal-dual methods. Nguyen~\cite{Nguyen19} provided through a primal-dual formulation price of anarchy bounds 
for sequential second-price sponsored search auctions, and for sequential first-price auctions with unit-demand valuations. 
Primal-dual methods have also been applied to inspect the efficiency of other classes of games. 
For example, Nadav and Roughgarden~\cite{NR10} by a primal-dual argument characterized the set of outcomes for which 
smoothness~\cite{Rough09} arguments apply for price of anarchy bounds, and proposed a refinement of it to obtain better price of anarchy 
bounds for coarse correlated equilibria. Bilo~\cite{Bilo13} showed that constant-ratio efficiency 
bounds may be obtained for weighted congestion games even with quadratic and cubic latency functions through a primal-dual formulation. 
Likewise, Kulkarni and Mirrokni~\cite{KM15} provided bounds on the robust price of anarchy for several classes of games through the use of 
LP and Fenchel duality. 

\subsection{Overview}
Section~\ref{sec:model} describes the two-buyer sequential multiunit auction. There some basic properties of equilibria are given
along with an example. 
 In Section~\ref{sec:LP}, we present a linear programming approach for lower bounding the price of anarchy in
these auctions. We then investigate structural properties of equilibria that induce valid inequalities that can be incorporated into
our linear programs. In Section~\ref{sec:concave} we show via LP-duality that this approach produces a bound of $1-1/e$ for concave 
valuation functions; we then show this bound is asymptotically tight.
Finally, in Section~\ref{sec:general} this method is used produce a lower bound of $1/T$ for general valuation functions. An example with a
matching upper bound shows this price of anarchy bound is tight.

\section{The Sequential Auction Model}\label{sec:model}
We study two-buyer sequential auctions under the complete information model of Gale and Stegeman~\cite{GS01}.
Here, we present the model, notation and the concept of efficiency as in Ahunbay et al.~\cite{ALV20}.
There are $T$ identical items which are sold one by one in a sequence of second-price auctions. 
Buyer~$i \in \buyset$ has \emph{value} $V_i(k)$ for winning $k$ items. Given the valuations, we will say buyer $i$ 
has \emph{incremental value} $v_i(k)$
for obtaining a $k$th item: formally, $v_i(k) = V_i(k) - V_i(k-1)$. 
We also make the standard assumption of \emph{free disposal}. Thus, the valuation functions
are non-decreasing; in particular, the incremental values are non-negative, i.e. $v_i(k) \geq 0$, for any buyer $i$ and 
any $k \in [T]:=\{1,2,\dots, T\}$.
Furthermore, we say the valuation function is {\em concave} if the incremental values are non-increasing; that is
$v_i(k)\ge v_i(k+1)$ for any $k\in [T]$.

\subsection{Forward Utilities and Equilibria}
To find an equilibrium in the sequential auction we make a Markov perfection assumption: in each round of the auction, 
buyer $i$ makes a bid conditioned on the 
number of items previously won by each buyer. The set of histories is then given by 
$\h = \{ \x \in \mathbb{Z}^2: \x \geq 0, x_1 + x_2 \leq T\}$. For $\x \in \h$, if $x_1 + x_2 = T$, then $\x$ is 
called a \emph{terminal node}; otherwise, $\x$ is called a \emph{decision node}.

We can find an {\em equilibrium} by computing the \emph{forward utility} of each buyer at each node 
$\x \in \h$. The forward utility at $\x$ is the profit a buyer will earn from period $x_1 + x_2$ 
onwards, provided that each buyer $i$ has won $x_i$ items. 
This can be calculated by backwards induction on $x_1 + x_2$. 
If $\x$ is a terminal node then the auction has 
ended. Hence the forward utility of each buyer is zero at such a terminal node, i.e. $u_i(\x) = 0$ for any $i \in \buyset$. 
It remains to evaluate the forward utility of each buyer at a decision node $\x$. Decision node $\x$ has two \emph{direct successors}: 
the node $(x_1+1,x_2)$ is called the {\em left child} of $\x$ and corresponds 
to buyer~$1$ winning the item at $\x$; likewise the node $(x_1,x_2+1)$ is the {\em right child} of $\x$ and corresponding to buyer~$2$ winning
at $\x$. Then, in a second-price auction at decision node $\x$, the unique bidding strategies that survive the iterative elimination 
of weakly dominated strategies are:
\begin{align}\label{eqn:bids}
b_1(\x) &\ =\  v_1(x_1+1) + u_1(x_1+1,x_2) - u_1(x_1,x_2+1) \nonumber \\
b_2(\x) &\ =\ v_2(x_2+1) + u_2(x_1,x_2+1) - u_2(x_1+1,x_2)
\end{align}
Let $p(\x)$ denote the price paid by the winning buyer at decision node $\x$. As this is a second-price auction, this 
price is simply the minimum of the two bids: 
\begin{equation}\label{eqn:price}
p(\x) \ =\  \min_{i \in \buyset}  b_i(\x)
\end{equation}
Now, if $b_1(\x) > b_2(\x)$ then buyer~$1$ wins and the utilities of the buyers are given by:
\begin{align}\label{eqn:1winUtilities}
u_1(\x) & = v_1(x_1+1)  - b_2(\x) + u_1(x_1+1,x_2) \nonumber \\
& = v_1(x_1+1) + u_1(x_1+1,x_2) + u_2(x_1+1,x_2) - u_2(x_1,x_2+1) - v_2(x_2+1) \nonumber \\
u_2(\x) & = u_2(x_1+1,x_2) 
\end{align}
Conversely, if $b_1(\x) < b_2(\x)$ then buyer~$2$ wins, and the utilities are defined symmetrically as:
\begin{align}\label{eqn:2winUtilities}
u_1(\x) & = u_1(x_1,x_2+1) \nonumber \\
u_2(\x) & = v_2(x_2+1)  - b_1(\x) + u_2(x_1,x_2+1) \nonumber \\
& = v_2(x_2+1) + u_1(x_1,x_2+1) + u_2(x_1,x_2+1) - u_1(x_1+1,x_2) - v_1(x_1+1)
\end{align}
Finally, if $b_1(\x) = b_2(\x)$ then  for any buyer $i$:
\begin{equation}\label{eqn:tieUtilities}
u_i(\x+\e_{-i}) = v_i(x_i+1) - b_i(\x) + u_i(\x+\e_i)
\end{equation}
Thus in the case of a tie, the utilities are invariant to which way the tie is broken. In particular, the forward utilities and 
bids of the buyers at each node are uniquely determined. Observe that this means that the Markov perfection assumption 
does not result in any loss of generality. 
We remark that a two-buyer sequential multiunit auction may be represented by a labelled directed tree rooted at decision 
node $(0,0)$. Such a representation is later given in Example~1.
This notation allows for a simple description of the forward utilities at each decision node. 
Denote by $U(\x) = \sum_{i \in \buyset} u_i(\x)$ the sum of the forward utilities of the two buyers at node $\x=(x_1,x_2)$.
\begin{claim}\label{cl:i-wins}
Let $\x$ be a decision node. Then $b_i(\x) \geq b_{-i}(\x)$ if and only if:
$$v_i(x_i+1) + U(\x+\e_i)  \ \geq\ v_{-i}(x_{-i}+1) + U(\x+\e_{-i})$$
\end{claim}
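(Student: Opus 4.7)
The plan is to prove the claim by direct algebraic manipulation of the bid formulas in equation~(\ref{eqn:bids}), with no appeal to the substantive game-theoretic machinery beyond the definitions already in hand. By symmetry of the statement under swapping the two buyers, it suffices to establish the equivalence for $i=1$.

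First I would subtract the two lines of equation~(\ref{eqn:bids}) to form $b_1(\x) - b_2(\x)$. The resulting expression contains $v_1(x_1+1) - v_2(x_2+1)$ together with four forward-utility terms: $+u_1(x_1+1,x_2)$, $-u_1(x_1,x_2+1)$, $-u_2(x_1,x_2+1)$, and $+u_2(x_1+1,x_2)$. The key observation — and really the only step that requires any attention — is that these four terms pair up exactly as
\[
\bigl[u_1(\x+\e_1) + u_2(\x+\e_1)\bigr] \;-\; \bigl[u_1(\x+\e_2) + u_2(\x+\e_2)\bigr]
\;=\; U(\x+\e_1) - U(\x+\e_2).
\]
Each bid formula carries only the winner's own utility terms, but when we take the difference $b_1 - b_2$, the counterparty's utility terms appear with the right signs to complete the totals $U(\x+\e_1)$ and $U(\x+\e_2)$.

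Substituting this identity back yields
\[
b_1(\x) - b_2(\x) \;=\; \bigl[v_1(x_1+1) + U(\x+\e_1)\bigr] \;-\; \bigl[v_2(x_2+1) + U(\x+\e_2)\bigr],
\]
from which the claimed equivalence for $i=1$ is immediate. Invoking the symmetric relabelling gives the statement for $i=2$, completing the proof.

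Since the argument is purely a rearrangement of the defining equations, there is no real obstacle; the only thing to be careful about is tracking signs in the two $u_2$-terms, since $u_2$ enters $b_2(\x)$ with a positive sign on $u_2(\x+\e_2)$ and a negative sign on $u_2(\x+\e_1)$, which is the opposite pattern from how $u_1$ enters $b_1(\x)$. This sign bookkeeping is what makes the four utility terms collapse neatly into $U(\x+\e_1) - U(\x+\e_2)$ rather than into some less useful combination.
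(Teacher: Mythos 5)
Your proposal is correct and follows essentially the same route as the paper's proof: both reduce the claim to the observation that, upon comparing the two bid formulas from equation~(\ref{eqn:bids}), the four forward-utility terms regroup into $U(\x+\e_i) - U(\x+\e_{-i})$. The sign bookkeeping you highlight is exactly the rearrangement step the paper performs.
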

\begin{proof}
By (\ref{eqn:bids}), we have $b_i(\x) \geq b_{-i}(\x)$ at decision node $\x$ if and only if:
$$v_i(x_i+1) + u_i(\x+\e_i) - u_i(\x+\e_{-i}) \ \geq\ v_{-i}(x_{-i}+1) + u_{-i}(\x+\e_{-i}) - u_{-i}(\x+\e_i)$$
Rearranging, this is equivalent to:
$$v_i(x_i+1) + u_i(\x+\e_i) + u_{-i}(\x+\e_i) \ \geq\ v_{-i}(x_{-i}+1) + u_i(\x+\e_{-i}) + u_{-i}(\x+\e_{-i})$$
The claim then follows by definition of $U(\x+\e_i)$ and $U(\x+\e_{-i})$. 
\end{proof}
Together with (\ref{eqn:1winUtilities}) and (\ref{eqn:2winUtilities}), Claim~\ref{cl:i-wins} implies the following.
\begin{claim}[\cite{GS01}, Equation 7]\label{cl:maxForm}
The forward utility of buyer~$i$ at decision node $\x$ is exactly:
$$u_i(\x) = \max_{j \in \buyset} [v_j(x_j+1) + U(\x+\e_j)] - v_{-i}(x_{-i}+1) - u_{-i}(\x + \e_{-i}) \eqno\qed$$
\end{claim}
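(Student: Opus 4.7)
The plan is to prove the identity by case analysis on which buyer wins at $\x$, using Claim~\ref{cl:i-wins} to identify which term achieves the maximum $\max_{j \in \buyset}[v_j(x_j+1) + U(\x+\e_j)]$ in each case. The fact that forward utilities are invariant under tie-breaking (equation~(\ref{eqn:tieUtilities})) lets me fold the boundary case into whichever side of the case split is more convenient.

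First I would treat the case $b_i(\x) \geq b_{-i}(\x)$, where buyer $i$ wins (or ties and is awarded the item). By Claim~\ref{cl:i-wins} the maximum on the right-hand side is attained at $j = i$, giving $\max_j[v_j(x_j+1) + U(\x+\e_j)] = v_i(x_i+1) + U(\x+\e_i)$. I then substitute the definition $U(\x+\e_i) = u_i(\x+\e_i) + u_{-i}(\x+\e_i)$ into the expression~(\ref{eqn:1winUtilities}) for $u_i(\x)$, which rearranges directly into the claimed formula.

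Next I would handle the case $b_i(\x) < b_{-i}(\x)$, where buyer $-i$ wins. By Claim~\ref{cl:i-wins} the maximum is attained at $j = -i$, so the right-hand side becomes $v_{-i}(x_{-i}+1) + U(\x+\e_{-i}) - v_{-i}(x_{-i}+1) - u_{-i}(\x+\e_{-i}) = u_i(\x+\e_{-i})$, which matches $u_i(\x)$ by~(\ref{eqn:2winUtilities}). This case is essentially a one-line cancellation once the maximum is identified.

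There is no real obstacle here — the only thing to be careful about is the tie case, but equation~(\ref{eqn:tieUtilities}) guarantees consistency, and the weak inequality in Claim~\ref{cl:i-wins} means ties can be absorbed into either branch of the case split. The whole argument is just a substitution once Claim~\ref{cl:i-wins} is invoked to pin down the active term in the maximum.
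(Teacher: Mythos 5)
Your proof is correct and is essentially the argument the paper intends: the paper omits the details, stating only that the claim follows from Claim~\ref{cl:i-wins} together with (\ref{eqn:1winUtilities}) and (\ref{eqn:2winUtilities}), and your case analysis (using Claim~\ref{cl:i-wins} to identify the maximizer, then substituting the winner's and loser's utility formulas, with ties absorbed via (\ref{eqn:tieUtilities})) is exactly that argument written out.
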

To conclude this section, we illustrate the auction and associated concepts by an example.
\begin{shaded}
\noindent \textbf{Example 1.}
Consider a sequential multiunit auction of $T = 2$ items. Let buyer~$1$ and buyer~$2$ have concave valuation functions with
incremental values $\left(v_1(1),v_1(2)\right) = (10,10)$ and $\left(v_2(1),v_2(2)\right) = (5,0)$. This auction has three decision 
nodes $\{(0,0), (1,0), (0,1)\}$ and 
three terminal nodes  $\{(2,0), (1,1), (0,2)\}$. At decision nodes $(1,0)$ and $(0,1)$, buyers face a single-item 
second price auction and 
bid their incremental valuations:
\begin{align*}
b_1(1,0) & = v_1(2) = 10& b_1(0,1) & = v_1(1) = 10 \\
b_2(1,0) & = v_2(1) = 5& b_2(0,1) & = v_2(2) = 0
\end{align*}
Therefore, at decision node $(1,0)$ buyer $1$ wins the item, and the auction proceeds to terminal node $(2,0)$. 
Buyer $1$ also wins the item at decision node $(0,1)$, with the auction then
proceeding to terminal node $(1,1)$. At decision node $(1,0)$ buyer $1$ pays $b_2(1,0) = 5$ 
for the item but at decision node $(0,1)$ pays $b_2(0,1) = 0$. Hence the forward utilities of the buyers at 
the decision nodes $(1,0)$ and $(0,1)$ are:
\begin{align*}
u_1(1,0) &= v_1(2) - b_2(1,0) + u_1(2,0) = 10 - 5 + 0 = 5 & u_2(1,0) = u_2(2,0) = 0\\
u_1(0,1) &= v_1(1) - b_2(0,1) + u_1(1,1)  = 10 - 0 + 0 = 10 & u_2(0,1) = u_2(2,0) = 0
\end{align*}
Now consider decision node $(0,0)$. Here the buyers bid strategically, knowing 
that whether they win or lose in the first period changes their profits throughout the auction:
\begin{align*}
b_1(0,0) &= v_1(1) + u_1(1,0) - u_1(0,1) = 10 + 5 - 10 = 5 \\
b_2(0,0) &=  v_2(1) + u_2(0,1) - u_2(1,0) = 5 + 0 - 0 = 5
\end{align*}
Thus the bids of the buyers tie at $(0,0)$. If the tie is broken in favour of 
buyer $1$ at price $5$, its forward utility is:
\begin{align*}
u_1(0,0) & = v_1(1) - b_2(0,0) + u_1(1,0) = 10 - 5 + 5 = 10
\end{align*}
If the tie is broken in favour of buyer $2$ then $u_1(0,0) = u_1(0,1) = 10$. Therefore, buyer $1$'s utility indeed is the same 
regardless of which way the tie is broken. Similarly, $u_2(0,0) = 0$ irrespective of which way the tie is broken. 
The resulting auction tree is:

\begin{center}
 \begin{tikzpicture}[scale=.7]
  \node [ellipse,draw, fill=red!20, align=center, scale=0.6] (v1) at (4,4) { {\bf (0,0)} \\ {\bf 10} : {\bf 0} };
\node [ellipse,draw, fill=red!20, align=center, scale=0.6](v2) at (2,2) { {\bf(1,0)} \\ {\bf 5} : {\bf 0} };
\node [ellipse,draw, fill=red!20, align=center, scale=0.6](v3) at (6,2) { {\bf(0,1)} \\ {\bf 10} : {\bf 0} };
\node[ellipse,draw, fill=yellow!20, align=center, scale=0.6](v4) at (0,0) { {\bf (2,0)} \\ {\bf 0} : {\bf 0}};
\node[ellipse,draw, fill=yellow!20, align=center, scale=0.6](v5) at (4,0) { {\bf (1,1)} \\  {\bf 0} : {\bf 0} };
\node[ellipse,draw, fill=yellow!20, align=center, scale=0.6](v6) at (8,0) { {\bf (0,2)} \\ {\bf 0} : {\bf 0} };
\draw [->] (v1) -- (v2) node[very near start,left, scale = .66]{$5\ $};
\draw [->, very thick] (v1) -- (v3) node[very near start,right, scale = .66]{$\ \ 5$};
\draw [->] (v2) -- (v4) node[very near start,left, scale = .66]{$10\ $} ;
\draw [->, dotted] (v2) -- (v5) node[very near start,right, scale = .66]{$\ 5$};
\draw [->, very thick] (v3) -- (v5) node[very near start,left, scale = .66]{$10\ $};
\draw [->, dotted] (v3) -- (v6) node[very near start,right, scale = .66]{$\ \ 0$};
\end{tikzpicture}
\end{center}

\noindent At each node $(x_1,x_2)$ the forward utilities of the buyers are shown. The label of the arrow from a decision node to its left child 
shows buyer $1$'s bid at that decision node; similarly, the label of the arrow to its right child shows buyer $2$'s bid. 
An arrow is solid if its label is a maximal bid at the node it originates from, and dotted otherwise. The bold path shows the realization of the 
equilibrium path if the tie at node $(0,0)$ is broken in favour of buyer~$2$.
\end{shaded}

\subsection{Social Welfare and Efficiency}\label{subsec:swAndEfficiency}

The purpose of this paper is to evaluate the price of anarchy in the sequential auction. This requires us to formally define
the social welfare of an allocation. To wit, let $\x$ be a decision node and denote by $t(\x) = T-x_1-x_2$ the number of items for sale 
starting from $\x$. Then the \emph{social welfare} from decision node $\x$ of the allocation where buyer~$1$ wins exactly $k$ more items is: 
\begin{align}\label{def-sw}
\sw(k|\x) 
&\ =\ V_1(x_1+k)  - V_1(x_1) + V_2(T-x_1-k) - V_2(x_2) \nonumber \\
&\ =\ \sum_{j = x_1+1}^{x_1+k} v_1(j) + \sum_{j = x_2 + 1}^{T-x_1-k} v_2(j)
\end{align}
The \emph{optimal social welfare} from decision node $\x$ is then given by:
\begin{equation}\label{def-opt-sw}
\opt(\x) \ =\ \max_{k \in [t(\x)] \cup \{0\}} \,\sw(k|\x)
\end{equation}
Our formal treatment of \emph{efficiency} will relate the optimal social welfare at a decision node $\x$ to the social 
welfare of some terminal node $\x + (k,t(\x)-k)$, given that there exists some equilibrium path connecting the two nodes.
To do this, we first present the formal definition of an \emph{(equilibrium) path} from \cite{ALV20}. 
A \emph{path} from decision node $\x$ is a $(t(\x)+1)$-tuple $P = (\x^{t(\x)},\x^{t(\x)-1},...,\x^1,\x^0)$ such that: (i) the 
path starts from $\x$, i.e. $\x^{t(\x)} = \x$, and (ii) each successive node follows from some buyer~$j$ acquiring an 
item, i.e. for each $k \in [t(\x)]$, $\x^{k-1} = \x^{k}+\e_j$ for some $j \in \buyset$. A path $P$ is called 
an \emph{equilibrium path} if each successive node follows from some buyer $j$ acquiring an item by outbidding 
the other player, i.e. for any $k \in [t(\x)]$, if $\x^{k-1} = \x^{k}+\e_j$, then $b_j(\x^k) \geq b_{-j}(\x^k)$. Finally, for 
any path $P = (\x^{t(\x)},\x^{t(\x)-1},...,\x^1,\x^0)$ and any $s \in [t(\x)]$, we denote by $P^s = (\x^s,\x^{s-1},...,\x^1,\x^0)$ 
the final segment of $P$ of $s+1$ nodes.

We now present our notion of efficiency. The \emph{efficiency along path $P$}, denoted $\Gamma(P)$, satisfies:
\begin{equation}
\Gamma(P) = \begin{cases}
\,\frac{\sw(x_1^0-x_1^{t(\x)} | \x)}{\opt(\x)} &\qquad \opt(\x) \neq 0 \\
\qquad 1 &\qquad \opt(\x) = 0
\end{cases}
\end{equation}
For example, consider the equilibrium path shown in the figure of Example 1. This path from the root of the tree allocates one item to each buyer
so $\sw(x_1^0-x_1^{t(\0)} | \0)=5+10=15$. On the hand social welfare is maximized by allocated both items to buyer~$1$, so $\opt(\0)=10+10=20$. 
Consequently, the efficiency along this path is $\frac{15}{20}=\frac34$.
The \emph{price of anarchy} (over some class) is then the infimum of the set of possible efficiency values along 
equilibrium paths of auctions in class. Here, of course, the class of auctions we consider is two-buyer sequential multiunit auctions.

We remark that because the valuation functions $V_i(\cdot)$ are non-decreasing, the price of anarchy is meaningful; in particular,
it is always lies between $0$ and $1$. To see this, as the incremental valuations are non-negative, (\ref{def-sw}) and (\ref{def-opt-sw}) 
imply that $\opt(\x) \geq \sw(k|\x) \geq 0$ at any decision node $\x$, for any $0 \leq k \leq t(\x)$. 
Thus $\Gamma(P) \in [0,1]$ for any path $P$ starting from $\x$.

\section{A Linear Programming Formulation}\label{sec:LP}

In this section, we provide a linear programming approach for bounding the price of anarchy in
a two-buyer sequential auction.
We begin in Section~\ref{sec:structure} by presenting a set of structural results concerning equilibria in
the sequential auction. These structural properties will induce a class of linear programs that can 
be used to lower bound the price of anarchy. Then, in Section~\ref{sec:valid}, we motivate and generate an 
additional class of valid inequalities that must hold along equilibrium paths.
In Sections~\ref{sec:concave} and~\ref{sec:general}, we will prove that the incorporation of these valid inequalities into our linear programs
suffice to provide tight price of anarchy bounds for concave valuation functions and general valuation functions, respectively.

\subsection{Structural Results}\label{sec:structure}

Let us first note two results from Gale and Stegeman~\cite{GS01}. The first is the intuitive result that buyer~$i$ does not derive any 
benefit from letting buyer~$-i$ win an item at no cost:

\begin{lemma}[\cite{GS01}, Lemma 1]\label{lem:noFreeWin}
Let $\x$ be a decision node. Then $u_i(\x) \geq u_i(\x+\e_{-i})$ for any buyer $i$. Moreover, this inequality is 
strict if and only if $b_{i}(\x) > b_{-i}(\x)$, that is, buyer $i$ wins with a strictly greater bid at decision node $\x$. \qed
\end{lemma}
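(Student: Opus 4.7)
The plan is to do a straightforward case analysis on the outcome of the auction at decision node $\x$, using the explicit utility formulas (\ref{eqn:1winUtilities})--(\ref{eqn:tieUtilities}) together with the bid formulas (\ref{eqn:bids}). The three cases to consider are (a) $b_i(\x) < b_{-i}(\x)$, (b) $b_i(\x) > b_{-i}(\x)$, and (c) $b_i(\x) = b_{-i}(\x)$.

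In case (a), buyer $-i$ wins the item at $\x$. The loser's utility formula in (\ref{eqn:1winUtilities}) or (\ref{eqn:2winUtilities}) (symmetrically) gives $u_i(\x) = u_i(\x + \e_{-i})$ directly, so the non-strict inequality holds with equality, consistent with the ``only if'' direction of the moreover clause. In case (b), buyer $i$ wins, and the winner's utility formula gives $u_i(\x) = v_i(x_i+1) - b_{-i}(\x) + u_i(\x + \e_i)$. Substituting the bid $b_i(\x) = v_i(x_i+1) + u_i(\x + \e_i) - u_i(\x + \e_{-i})$ from (\ref{eqn:bids}) and rearranging yields
\[
u_i(\x) - u_i(\x + \e_{-i}) \;=\; b_i(\x) - b_{-i}(\x) \;>\; 0,
\]
which establishes both the strict inequality and the ``if'' direction of the moreover clause.

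Case (c) is the tie case. By the observation recorded immediately after (\ref{eqn:tieUtilities}), the forward utilities are invariant to how the tie is broken, so I can treat it as either (a) or (b). Breaking the tie against $i$ reduces to case (a), yielding $u_i(\x) = u_i(\x + \e_{-i})$; equivalently, breaking in favor of $i$ reduces to case (b) but with $b_i(\x) - b_{-i}(\x) = 0$, again giving equality. This is consistent with the moreover clause, since the strict inequality is claimed only when $b_i(\x) > b_{-i}(\x)$.

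I do not expect a real obstacle here: the lemma is essentially a direct consequence of the equilibrium bid formula and the second-price utility accounting. The only mild care needed is in the tie case, where one must invoke the invariance property to avoid ambiguity about who actually wins.
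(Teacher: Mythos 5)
Your proof is correct. The paper itself states this lemma without proof, citing Gale and Stegeman~\cite{GS01}, so there is no in-paper argument to compare against; your case analysis on the sign of $b_i(\x)-b_{-i}(\x)$, using the loser's formula for equality and the identity $u_i(\x)-u_i(\x+\e_{-i})=b_i(\x)-b_{-i}(\x)$ when buyer~$i$ wins, is the standard direct derivation and establishes both the weak inequality and both directions of the ``moreover'' clause, with the tie case correctly handled via the invariance noted after~(\ref{eqn:tieUtilities}).
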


The second result is the {\em declining price anomaly}: prices are non-increasing along any equilibrium path.
\begin{lemma}[\cite{GS01}, Lemma 2]\label{lem:decreasingPrices}
Let $\x$ be a decision node such that $t(\x) > 1$. If buyer $i$ wins at $\x$, then $p(\x) \geq p(\x+\e_i)$. 
Moreover, $p(\x) = p(\x+\e_i)$ if and only if buyer $i$ also wins at decision node $\x+\e_{-i}$. \qed
\end{lemma}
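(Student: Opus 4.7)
The plan is to collapse both cases of the argument into a single identity that makes the inequality and the equality condition transparent. By symmetry I take $i=1$, so that (\ref{eqn:price}) gives $p(\x)=b_2(\x)$, and I must compare this with $p(\x+\e_1)\in\{b_1(\x+\e_1),b_2(\x+\e_1)\}$ depending on who wins at $\x+\e_1$.

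First I would extract a quantitative \emph{surplus identity} sharpening Lemma~\ref{lem:noFreeWin}: at every decision node $\y$,
\[
  u_2(\y)-u_2(\y+\e_1)\;=\;\bigl(b_2(\y)-b_1(\y)\bigr)^+,
\]
where $(z)^+:=\max(z,0)$. This falls out of the utility updates by inspection: if buyer 1 wins at $\y$, both sides vanish by (\ref{eqn:1winUtilities}) and the fact that $b_1(\y)\ge b_2(\y)$; if buyer 2 wins, (\ref{eqn:2winUtilities}) combined with the definition of $b_2(\y)$ in (\ref{eqn:bids}) rewrites the left-hand side as exactly $b_2(\y)-b_1(\y)\ge 0$.

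Next I would expand $b_2(\x)-b_2(\x+\e_1)$ using (\ref{eqn:bids}); after the common $v_2(x_2+1)$ cancels, this becomes $[u_2(\x+\e_2)-u_2(\x+\e_1+\e_2)]-[u_2(\x+\e_1)-u_2(\x+2\e_1)]$, which the surplus identity turns into $\alpha^+-\beta^+$, where $\alpha:=b_2(\x+\e_2)-b_1(\x+\e_2)$ and $\beta:=b_2(\x+\e_1)-b_1(\x+\e_1)$. A short split on the winner at $\x+\e_1$ now finishes things. If buyer 1 wins there, then $\beta^+=0$ and $p(\x+\e_1)=b_2(\x+\e_1)$, giving $p(\x)-p(\x+\e_1)=\alpha^+$. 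If buyer 2 wins at $\x+\e_1$, then $\beta^+=\beta$ and $p(\x+\e_1)=b_1(\x+\e_1)$, and the computation $p(\x)-p(\x+\e_1)=[b_2(\x)-b_2(\x+\e_1)]+\beta=(\alpha^+-\beta)+\beta=\alpha^+$ again returns $\alpha^+$. Either way $p(\x)-p(\x+\e_1)=\alpha^+\ge 0$, with equality precisely when $\alpha\le 0$, i.e.\ when buyer 1 also wins at $\x+\e_2$.

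The step I expect to be the main obstacle is the surplus identity: Lemma~\ref{lem:noFreeWin} only asserts non-negativity of $u_2(\y)-u_2(\y+\e_1)$, but one really needs its exact value to make the two cases cancel. A naive attempt to show $b_2(\x)\ge b_2(\x+\e_1)$ directly would fail, since when buyer 2 wins at $\x+\e_1$ her downstream losses can make $b_2(\x+\e_1)>b_2(\x)$; only after the positive-part terms $\alpha^+$ and $\beta^+$ are introduced do the case-specific slacks cancel uniformly.
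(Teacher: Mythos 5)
The paper does not actually prove this lemma: it is quoted from Gale and Stegeman \cite{GS01} with the proof omitted, so there is no in-paper argument to compare against. Your proof is correct and self-contained. The surplus identity $u_2({\bf y})-u_2({\bf y}+\e_1)=\bigl(b_2({\bf y})-b_1({\bf y})\bigr)^+$ checks out in all three cases (buyer~$1$ wins, buyer~$2$ wins strictly, tie --- in the tie case (\ref{eqn:tieUtilities}) forces $u_2({\bf y})=u_2({\bf y}+\e_1)$, so both sides vanish), and it is exactly the quantitative form of Lemma~\ref{lem:noFreeWin}. Substituting it into $b_2(\x)-b_2(\x+\e_1)=[u_2(\x+\e_2)-u_2(\x+\e_2+\e_1)]-[u_2(\x+\e_1)-u_2(\x+2\e_1)]$ is legitimate because the hypothesis $t(\x)>1$ guarantees that $\x+\e_1$ and $\x+\e_2$ are decision nodes --- this is precisely where that hypothesis is used, and it would be worth saying so explicitly. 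The resulting closed form $p(\x)-p(\x+\e_1)=\bigl(b_2(\x+\e_2)-b_1(\x+\e_2)\bigr)^+$ delivers the monotonicity and the equality characterization in one stroke, and your two-case check of the winner at $\x+\e_1$ correctly handles the fact that $p(\x+\e_1)$ switches between $b_2(\x+\e_1)$ and $b_1(\x+\e_1)$. The only convention to flag is that ``buyer~$1$ wins at $\x+\e_2$'' must be read as the weak inequality $b_1(\x+\e_2)\ge b_2(\x+\e_2)$, which is the convention the paper uses elsewhere (e.g.\ in the definition of an equilibrium path), so your ``if and only if'' matches the intended statement.
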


Because the forward utilities are zero at each terminal node, an immediate consequence
of Lemma~\ref{lem:noFreeWin} is that the forward utility of any buyer at any decision node is non-negative.
\begin{corollary}\label{cor:nonNegUtils}
Let $\x \in \h$ and $i \in \buyset$, then $u_i(\x) \geq 0$. \qed
\end{corollary}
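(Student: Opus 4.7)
The plan is to prove the corollary by backwards induction on $t(\x) = T - x_1 - x_2$, the number of items remaining to be sold from node $\x$. This is the natural induction parameter here, since the forward utilities are themselves defined by backwards induction on $t(\x)$, and Lemma~\ref{lem:noFreeWin} compares $u_i(\x)$ to $u_i(\x+\e_{-i})$, which sits one level closer to a terminal node.

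For the base case, I would take $t(\x) = 0$, i.e.\ $\x$ is a terminal node. Then by definition $u_i(\x) = 0 \geq 0$ for any buyer $i \in \buyset$, which trivially gives the desired inequality. For the inductive step, I would fix $t \geq 1$ and assume $u_i(\y) \geq 0$ for every $\y \in \h$ with $t(\y) < t$ and every $i \in \buyset$. Now take any decision node $\x$ with $t(\x) = t$. Since $t(\x + \e_{-i}) = t - 1 < t$, the induction hypothesis yields $u_i(\x + \e_{-i}) \geq 0$. Applying Lemma~\ref{lem:noFreeWin} gives $u_i(\x) \geq u_i(\x+\e_{-i}) \geq 0$, completing the induction.

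There is really no obstacle here: the corollary is essentially one line once Lemma~\ref{lem:noFreeWin} has been set up, and the only thing to be careful about is that $\x + \e_{-i}$ is a valid node of $\h$, which follows because $\x$ being a decision node means $x_1 + x_2 < T$, so $(\x+\e_{-i})_1 + (\x+\e_{-i})_2 \leq T$. The statement then holds at every $\x \in \h$ since it holds at both decision and terminal nodes.
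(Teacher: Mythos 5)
Your proof is correct and follows exactly the argument the paper intends: the paper states the corollary as an ``immediate consequence'' of Lemma~\ref{lem:noFreeWin} together with the fact that utilities vanish at terminal nodes, and your backwards induction on $t(\x)$ is precisely the formalization of that one-line remark. Nothing is missing; the only detail worth checking, that $\x+\e_{-i}$ lies in $\h$, is one you have already handled.
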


Furthermore, given the assumption of non-decreasing valuation functions, Lemma \ref{lem:decreasingPrices} implies 
that prices are non-negative.
\begin{lemma}\label{lem:nonNegativePrices}
At any decision node $\x$, $p(\x) \geq 0$.
\end{lemma}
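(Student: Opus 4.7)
The plan is to prove $p(\x) \geq 0$ by induction on $t(\x) = T - x_1 - x_2$, the number of items remaining to be sold from decision node $\x$. The key tool in the inductive step will be the declining price anomaly (Lemma~\ref{lem:decreasingPrices}), which propagates non-negativity backwards from later decision nodes along an equilibrium path to earlier ones.

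For the base case, I would take $t(\x) = 1$. Here both direct successors $\x + \e_1$ and $\x + \e_2$ are terminal nodes, so the forward utilities $u_i(\x + \e_1)$ and $u_i(\x + \e_2)$ all vanish for both buyers. Substituting into the bid formula~(\ref{eqn:bids}) yields $b_i(\x) = v_i(x_i + 1)$ for each buyer $i$, and free disposal gives $v_i(x_i + 1) \geq 0$. Hence $p(\x) = \min_{i \in \buyset} b_i(\x) \geq 0$.

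For the inductive step, suppose $p(\x') \geq 0$ for every decision node $\x'$ with $t(\x') < t(\x)$, and let $t(\x) \geq 2$. Choose a buyer $i$ who wins at $\x$ (breaking ties arbitrarily). Since $t(\x) > 1$, Lemma~\ref{lem:decreasingPrices} applies, giving $p(\x) \geq p(\x + \e_i)$. Moreover $t(\x + \e_i) = t(\x) - 1 \geq 1$, so $\x + \e_i$ is itself a decision node and the inductive hypothesis yields $p(\x + \e_i) \geq 0$. Chaining these two inequalities gives $p(\x) \geq 0$, closing the induction.

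The main obstacle, to the extent there is one, is simply noticing that Lemma~\ref{lem:decreasingPrices} cannot close the argument on its own, since it carries the side condition $t(\x) > 1$. This is why the base case must be handled separately, and it is only at that base case that free disposal (via the bid formula~(\ref{eqn:bids}) with vanishing forward utilities) is needed to seed the induction. Everything else is an application of the declining price anomaly.
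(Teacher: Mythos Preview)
Your proposal is correct and follows essentially the same approach as the paper's own proof: induction on $t(\x)$, with the base case $t(\x)=1$ handled by noting that the bids reduce to the (non-negative) incremental values, and the inductive step handled by invoking the declining price anomaly (Lemma~\ref{lem:decreasingPrices}) together with the inductive hypothesis. Your write-up is slightly more explicit about why the base-case bids equal $v_i(x_i+1)$, but the argument is otherwise identical.
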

\begin{proof}
By induction on $t(\x)$. For the base case, $t(\x) = 1$, we have a single-item second price auction at $\x$. 
Therefore, $p(\x) = \min_{i \in \buyset} v_i(x_i+1) \geq 0$ as the incremental valuations are non-negative. 
Now consider $t(\x) > 1$. Let buyer $i$ win at $\x$. Then, by Lemma~\ref{lem:decreasingPrices}, $p(\x) \geq p(\x+\e_i)$. 
But, by the induction hypothesis, $p(\x+\e_i) \geq 0$. Thus, $p(\x) \geq 0$.
\end{proof}

We may now derive a simple upper bound on the forward utility of any buyer.
Observe that, for any equilibrium path $P$ starting at $\x$, the forward utility of buyer~$i$ at $\x$ is the value
it has for the items it wins on the path $P$ minus the total price it pays. Thus, because the prices are 
non-negative by Lemma~\ref{lem:nonNegativePrices}, the total value buyer $i$ has for the items it wins on the equilibrium path $P$ 
is an upper bound on its forward utility at $\x$.
\begin{lemma}\label{lem:utilityUpperBounds}
Let $\x$ be a decision node and $P = (\x^{t(\x)},...,\x^0)$ an 
equilibrium path starting at $\x$. Then, for any $i \in \buyset$, $u_i(\x) \leq \sum_{j = x_i +1}^{x^0_i} v_i(j)$. \qed
\end{lemma}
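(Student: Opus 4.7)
The plan is to induct on $t(\x)$, using the price non-negativity established in Lemma~\ref{lem:nonNegativePrices} together with the recursive expressions~(\ref{eqn:1winUtilities})--(\ref{eqn:tieUtilities}) for forward utilities. The intuition to formalize is exactly the one the paragraph preceding the lemma states: $u_i(\x)$ equals the value buyer~$i$ accrues for the items it wins along $P$ minus the prices it pays, and every such price is non-negative.

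For the base case $t(\x) = 1$, the node is a single-item second-price auction. If buyer~$i$ wins then $x^0_i = x_i + 1$ and $u_i(\x) = v_i(x_i+1) - p(\x) \leq v_i(x_i+1)$, matching the right-hand side; if buyer~$-i$ wins then $u_i(\x) = 0$ while the sum is empty. For the inductive step, let $j \in \buyset$ be the buyer acquiring the item at $\x$ along $P$, so $\x^{t(\x)-1} = \x + \e_j$. If $j = i$, then~(\ref{eqn:1winUtilities}) (or its symmetric counterpart) rewrites $u_i(\x) = v_i(x_i+1) - p(\x) + u_i(\x + \e_i)$; applying the inductive hypothesis to the final segment $P^{t(\x)-1}$ starting at $\x + \e_i$ bounds $u_i(\x + \e_i) \leq \sum_{k=x_i+2}^{x^0_i} v_i(k)$, and dropping the non-positive term $-p(\x)$ closes the inequality. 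If $j = -i$, then $u_i(\x) = u_i(\x + \e_{-i})$ and the $i$-coordinate is unchanged on this step, so the inductive hypothesis at $\x + \e_{-i}$ delivers the required bound unchanged.

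There is no genuine obstacle here: the proof is essentially a telescoping of the utility recursion in which price non-negativity supplies all the slack. The only mild care needed is the tie case $b_1(\x) = b_2(\x)$, where equation~(\ref{eqn:tieUtilities}) guarantees that $u_i(\x)$ is well-defined independently of how the tie is broken, so we may treat whichever buyer the equilibrium path $P$ records as the winner in the recursion without affecting the computed utility.
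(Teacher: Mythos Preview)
Your proposal is correct and follows exactly the approach the paper sketches in the paragraph immediately preceding the lemma: the forward utility equals accrued value minus prices paid, and Lemma~\ref{lem:nonNegativePrices} makes those prices non-negative. The paper treats that paragraph as the proof (hence the bare \qed), whereas you have simply written out the induction on $t(\x)$ explicitly; there is no substantive difference.
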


Claim~\ref{cl:maxForm} also immediately provides an explicit form for the difference of the buyers' forward utilities.
\begin{lemma}\label{lem:utilDifs}
Let $\x$ be a decision node. Then:
$$u_i(\x) - u_{-i}(\x) = v_i(x_i+1) + u_i(\x+\e_i) - v_{-i}(x_{-i}+1) - u_{-i}(\x+\e_{-i})\eqno\qed$$
\end{lemma}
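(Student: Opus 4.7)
The plan is to derive the claimed identity by a direct subtraction of two instances of Claim~\ref{cl:maxForm}. Specifically, I would write down the expression for $u_i(\x)$ given by that claim, and then write down the analogous expression obtained by swapping the roles of the two buyers to get $u_{-i}(\x)$. Since both expressions contain the same leading term $\max_{j \in \buyset}[v_j(x_j+1) + U(\x+\e_j)]$—this quantity is manifestly symmetric in the buyer index, depending only on $\x$—subtracting the two equations cancels it exactly.

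After the cancellation, what survives on the right-hand side are precisely the four terms $v_i(x_i+1)$, $u_i(\x+\e_i)$, $-v_{-i}(x_{-i}+1)$ and $-u_{-i}(\x+\e_{-i})$, assembled with the right signs to match the statement of the lemma. So the lemma reduces to a single line of algebra once both instances of Claim~\ref{cl:maxForm} are in hand.

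The only point to verify carefully is that Claim~\ref{cl:maxForm} genuinely applies in the symmetric form needed here, i.e. that swapping $i$ and $-i$ in the statement yields the correct expression for $u_{-i}(\x)$. This is immediate because Claim~\ref{cl:maxForm} was proved at a generic decision node without any ordering assumption on the buyers. I do not foresee any real obstacle; the content of the lemma is essentially that the symmetric $U(\x+\e_j)$-maximum drops out of the utility difference, leaving only the asymmetric parts.
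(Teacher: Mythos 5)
Your proposal is correct and matches the paper's approach exactly: the paper states Lemma~\ref{lem:utilDifs} as an immediate consequence of Claim~\ref{cl:maxForm}, obtained by subtracting the two instances of that claim so that the symmetric maximum term cancels. Nothing further is needed.
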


Finally, we turn our attention to the efficiency of paths. As the valuations are non-decreasing, we can show that the efficiency 
along a path $P = (\x^{t(\x)},\x^{t(\x)-1},...,\x^1,\x^0)$ may be bounded below by the efficiency along a specific 
subpath $P^{s}$ of $P$ (which may be $P$ itself), such that the unique optimum allocation from $\x^s$ has one 
buyer winning all the remaining items. The result generalises arguments made in the proofs of Theorem~2 in~\cite{BBB08}, 
and Lemma~6.2 and Lemma~6.3 in~\cite{ALV20} to possibly non-concave valuations.

\begin{lemma}\label{lem:pathEfficiencyBound}
Let $\x$ be a decision node and $P = (\x^{t(\x)},\x^{t(\x)-1},...,\x^1,\x^0)$ a 
path from $\x$. Then:
\begin{enumerate}
\item[\emph{(a)}] If $\x^{t(\x)-1} = \x + \e_1$ and $\exists k > 0, \sw(k|\x) = \opt(\x)$, then $\Gamma(P) \geq \Gamma(P^{t(\x)-1})$.
\item[\emph{(b)}] If $\x^{t(\x)-1} = \x + \e_2$ and $\exists k < t(\x), \sw(k|\x) = \opt(\x)$, then $\Gamma(P) \geq \Gamma(P^{t(\x)-1})$.
\end{enumerate}
\end{lemma}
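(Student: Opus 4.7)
The plan is to prove part~(a) in full; part~(b) follows by the symmetric argument with the roles of the two buyers swapped (and the condition ``$k > 0$'' replaced by ``$k < t(\x)$''). Fix $\x$ and set $\x' := \x + \e_1 = \x^{t(\x)-1}$, so that $P^{t(\x)-1}$ is a path from $\x'$. Let $K := x_1^0 - x_1$ be the total number of items buyer~$1$ acquires along $P$; since $\x^{t(\x)-1} = \x + \e_1$ we have $K \geq 1$, and buyer~$1$ acquires exactly $K-1$ items along $P^{t(\x)-1}$.

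The core of the argument is a pair of identities that tie together quantities at $\x$ and $\x'$. From the definition~(\ref{def-sw}) of $\sw$, splitting off the first term of buyer~$1$'s sum gives, for every $k \geq 1$,
$$\sw(k|\x) = v_1(x_1+1) + \sw(k-1|\x').$$
Applied at $k = K$ this produces the numerator relation between the two efficiencies. For the denominators, the hypothesis of part~(a) says that the maximum in~(\ref{def-opt-sw}) is attained at some $k \geq 1$, so substituting the previous identity yields
$$\opt(\x) = v_1(x_1+1) + \max_{k' \in [t(\x)-1] \cup \{0\}} \sw(k'|\x') = v_1(x_1+1) + \opt(\x').$$

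Setting $a := v_1(x_1+1) \geq 0$, $b := \sw(K-1|\x')$, and $c := \opt(\x')$, with $0 \leq b \leq c$ by the definition of $\opt$, the desired inequality $\Gamma(P) \geq \Gamma(P^{t(\x)-1})$ reduces to the elementary fact $\frac{a+b}{a+c} \geq \frac{b}{c}$, which follows from $c(a+b) - b(a+c) = a(c-b) \geq 0$. The step I expect to require the most care is the denominator identity: $\opt(\x) \geq v_1(x_1+1) + \opt(\x')$ always holds, but equality can fail when the optimizer is uniquely attained at $k = 0$ (all items to buyer~$2$), and this is exactly the case the hypothesis rules out. A secondary subtlety is the convention $\Gamma = 1$ when $\opt = 0$: if $\opt(\x) = 0$ the claim is trivial, and if $\opt(\x) > 0 = \opt(\x')$ the denominator identity forces $\sw(K-1|\x') = 0$, hence $\sw(K|\x) = v_1(x_1+1) = \opt(\x)$, so both efficiencies equal~$1$. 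Once these points are dispatched the elementary inequality closes part~(a), and part~(b) follows verbatim with buyers~$1$ and~$2$ interchanged.
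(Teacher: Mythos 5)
Your proposal is correct and follows essentially the same route as the paper's proof: the identity $\sw(k|\x) = v_1(x_1+1) + \sw(k-1|\x+\e_1)$, the observation that the hypothesis forces $\opt(\x) = v_1(x_1+1) + \opt(\x+\e_1)$, the elementary inequality $\frac{a+b}{a+c} \geq \frac{b}{c}$ for $a \geq 0$, $0 \leq b \leq c$, and the separate handling of the degenerate cases $\opt(\x) = 0$ and $\opt(\x+\e_1) = 0$. No gaps; your treatment of the $\opt = 0$ convention is, if anything, slightly more explicit than the paper's.
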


\begin{proof}
It suffices to prove (a), as (b) then follows from relabelling the buyers. If $\opt(\x) = 0$ then, because the efficiency is 
between $0$ and $1$ along any path, we have $\Gamma(P) = 1 \geq \Gamma(P^{t(\x)-1})$. So suppose 
that $\opt(x) > 0$.  Note that $t(\x+\e_1) = t(\x)-1$. So, by definition~(\ref{def-sw}), $\sw(k|\x+\e_1) = \sw(k+1|\x) - v_1(x_1+1)$ for 
any $k \in [t(\x+\e_1)]\cup \{0\}$. By assumption, there exists $k > 0$ such that $\sw(k|\x) = \opt(\x)$. Thus
$\opt(\x+\e_1)  = \opt(\x)-v_1(x_1+1)$. Therefore:
\begin{align*}
\Gamma(P) 
&\ =\ \frac{\sw(x^0_1 - x_1|\x)}{\opt(\x)} 
\ =\ \frac{\sw(x^0 - x_1 - 1|\x+\e_1) + v_1(x_1+1)}{\opt(\x+\e_1) + v_1(x_1+1)} 
\end{align*}
If $\opt(\x+\e_1) = 0$ then $\Gamma(P) = 1$ and so $\Gamma(P) \geq \Gamma(P^{t(\x)-1})$, as desired. 
Therefore, we may assume that $\opt(\x+\e_1) > 0$. Then:
\begin{align*}
\Gamma(P) 
&\ =\ \frac{\sw(x^0 - x_1 - 1|\x+\e_1) + v_1(x_1+1)}{\opt(\x+\e_1) + v_1(x_1+1)} \\
&\ \geq\ \frac{\sw(x^0 - (x_1 +1)|\x+\e_1)}{\opt(\x+\e_1)} \\
&\ =\ \Gamma(P^{t(\x)-1})
\end{align*}
Here the inequality holds because $v_1(x_1+1) \geq 0$.
\end{proof}

\begin{corollary}\label{cor:pathEfficiencyBound}
Let $\x$ be a decision node and $P = (\x^{t(\x)},\x^{t(\x)-1},...,\x^1,\x^0)$ a 
path from $\x$. If $\Gamma(P) < \Gamma(P^{t(\x)-1})$ then exactly one of the following holds:
\begin{enumerate}
\item[\emph{(a)}] Buyer~$1$ winning all the items is the unique optimal allocation but buyer~$2$ wins 
an item at $\x$, i.e. $\arg \max_{k \in [t(\x)] \cup \{0\}} \sw(k|\x) = \{t(\x)\}$ and $x^{t(\x)-1} = x^{t(\x)}+\e_2$.
\item[\emph{(b)}] Buyer~$2$ winning all the items is the unique optimal allocation but buyer~$1$ wins 
an item at $\x$, i.e. $\arg \max_{k \in [t(\x)] \cup \{0\}} \sw(k|\x) = \{0\}$ and $x^{t(\x)-1} = x^{t(\x)}+\e_1$.\qed
\end{enumerate}
\end{corollary}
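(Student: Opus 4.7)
The plan is to derive the corollary as the contrapositive of Lemma~\ref{lem:pathEfficiencyBound}, then verify that the two conclusions are mutually exclusive. Since $P$ is a path from $\x$, its first transition $\x^{t(\x)-1}$ is either $\x + \e_1$ or $\x + \e_2$, so I would split into these two cases and apply the relevant half of the lemma in each.

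First, suppose buyer~$1$ wins at $\x$, i.e. $\x^{t(\x)-1} = \x + \e_1$. By Lemma~\ref{lem:pathEfficiencyBound}(a), if there exists any $k > 0$ with $\sw(k|\x) = \opt(\x)$, then $\Gamma(P) \geq \Gamma(P^{t(\x)-1})$, contradicting the hypothesis. Hence no positive $k$ achieves the optimum, which forces $\arg\max_{k \in [t(\x)]\cup\{0\}} \sw(k|\x) = \{0\}$. This is precisely case~(b) of the corollary. The symmetric argument: if buyer~$2$ wins at $\x$, then Lemma~\ref{lem:pathEfficiencyBound}(b) combined with the hypothesis $\Gamma(P) < \Gamma(P^{t(\x)-1})$ rules out any $k < t(\x)$ achieving $\opt(\x)$, forcing $\arg\max_{k \in [t(\x)]\cup\{0\}} \sw(k|\x) = \{t(\x)\}$, giving case~(a).

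For the "exactly one" clause, I would simply observe that cases (a) and (b) are mutually exclusive: in (a) the unique optimizer is $k = t(\x)$ while in (b) it is $k = 0$, and since $t(\x) \geq 1$ at a decision node these sets are disjoint. Equivalently, the two cases also disagree on which buyer wins at $\x$.

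There is essentially no obstacle here since all the real work was done in Lemma~\ref{lem:pathEfficiencyBound}; the only minor point to check is that the case split on $\x^{t(\x)-1}$ is exhaustive, which follows from the definition of a path. The proof is short and clean, essentially a bookkeeping translation of the lemma into its contrapositive form.
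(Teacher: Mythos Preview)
Your proposal is correct and matches the paper's (implicit) approach: the paper gives no proof at all, marking the corollary with an immediate \qed, since it is exactly the contrapositive of Lemma~\ref{lem:pathEfficiencyBound} together with the observation that the two cases are mutually exclusive. Your write-up simply makes these obvious details explicit.
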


Corollary~\ref{cor:pathEfficiencyBound} implies that, if we are interested in obtaining a lower bound for 
efficiency, it is sufficient to consider auctions in which the unique optimal allocation is $(T,0)$
but buyer~$1$ wins less than $T$ items on an equilibrium path. 
Furthermore, any auction with efficiency less than $1$ must have positive optimal welfare.
Consequently, by multiplying all valuations, forward utilities and bids by a constant,
we may normalize so that $\opt(\0) = 1$. 

The following class of linear programs then provides lower bounds on the efficiency of the 
auction, conditional on buyer~$1$ winning $k$ items on the equilibrium path:

\begin{alignat}{3}\label{LP:unrestricted}
\text{minimize}\  & &\sum_{j = 1}^{k} v_1(j) + \sum_{j = 1}^{T-k} v_2(j) &&& \\
\text{subject to}\ & &\sum_{j = 1}^T v_1(j) &= 1 &&\nonumber \\
&    &\sum_{j = 1}^l v_1(j) + \sum_{j = 1}^{T-l} v_2(j) &\leq 1 &\quad\quad&\forall\, 0 \leq l < T \nonumber \\
&  &v_i(j) &\geq 0 							&\quad\quad&\forall i \in \buyset, j \in [T] \nonumber \\
&&&\text{(+ valid inequalities)}&&\nonumber 
\end{alignat}
%
%
Specifically, we obtain a lower bound for the efficiency of a $T$ item auction by finding the minimum value 
of the linear program for $0 \leq k < T$. Of course, the lower bound produced will depend upon the
choice of additional valid inequalities.
The difficulty is to select inequalities that must be satisfied at equilibrium {\bf and} that are strong enough to
provide an exact efficiency bound. Thus, our task reduces to finding such a set of inequalities.

\subsection{A Set of Valid Inequalities}\label{sec:valid}
The following theorem will allow us obtain a collection of valid inequalities that 
are strong enough to induce tight price of anarchy bounds.

\begin{theorem}\label{thm:pathValid}
Let $\x$ be a decision node and $P$ an equilibrium path from $\x$ with endpoint $(k,T-k)$. Then:
\begin{equation}\label{eqn:validUtil}
\sum_{j = 0}^{t(\x)} u_2(\x+j\e_1) \leq \sum_{i = x_2+1}^{T-k} \Big[(T-x_1-i+1) \cdot v_2(i) - \sum_{j = k+1}^{T-i+1} v_1(j)\Big]
\end{equation}
\end{theorem}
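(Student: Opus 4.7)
The plan is to induct on $t(\x)$, fixing the endpoint $(k,T-k)$ throughout. Write $S(\y) := \sum_{j=0}^{t(\y)} u_2(\y+j\e_1)$ and let $R(\y)$ denote the right-hand side of (\ref{eqn:validUtil}) with $\x$ replaced by $\y$ (and $k$ unchanged), so the goal is to prove $S(\x)\le R(\x)$. The base case $t(\x)=0$ is immediate: $\x$ is terminal, $S(\x)=u_2(\x)=0$, and $R(\x)$ is an empty sum. For the inductive step, let $P'=P^{t(\x)-1}$ be the tail of $P$, and split on which buyer wins at $\x$.

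If buyer~$1$ wins at $\x$, then $P'$ is an equilibrium path from $\x+\e_1$ with endpoint $(k,T-k)$, so by induction $S(\x+\e_1)\le R(\x+\e_1)$. A direct subtraction shows $R(\x)-R(\x+\e_1)=\sum_{i=x_2+1}^{T-k} v_2(i)$, and since $S(\x)=u_2(\x)+S(\x+\e_1)$, the claim reduces to $u_2(\x)\le \sum_{i=x_2+1}^{T-k} v_2(i)$. This is Lemma~\ref{lem:utilityUpperBounds} applied to buyer~$2$ along $P$.

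If buyer~$2$ wins at $\x$, then $P'$ is an equilibrium path from $\x+\e_2$. Here the main obstacle is that $S(\x)$ sums forward utilities along the ``all-right'' trajectory $\x,\x+\e_1,\ldots,\x+t(\x)\e_1$, which generally is \emph{not} a subsequence of $P$. To bridge this, I will apply Lemma~\ref{lem:utilDifs} (with $i=2$) at each $\x+j\e_1$ for $j=0,1,\ldots,t(\x)-1$ and sum. The $u_1$-terms telescope (using $u_1(\x+t(\x)\e_1)=0$) and $u_2$ vanishes at the terminal, yielding the identity
\begin{equation*}
S(\x) \;=\; u_1(\x) \;+\; t(\x)\cdot v_2(x_2+1) \;+\; S(\x+\e_2) \;-\; \sum_{m=x_1+1}^{T-x_2} v_1(m).
\end{equation*}
A direct subtraction on the right-hand side of (\ref{eqn:validUtil}) shows $R(\x)-R(\x+\e_2)=t(\x)\cdot v_2(x_2+1) - \sum_{j=k+1}^{T-x_2} v_1(j)$. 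Combining these with the inductive hypothesis $S(\x+\e_2)\le R(\x+\e_2)$ reduces the goal to $u_1(\x)\le \sum_{m=x_1+1}^{k} v_1(m)$, which is Lemma~\ref{lem:utilityUpperBounds} applied to buyer~$1$ along $P$.

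The hard part is spotting the telescoping identity in Case~B: it expresses the off-path quantity $S(\x)$ in terms of the on-path quantity $S(\x+\e_2)$ plus precisely those valuation sums that appear when $R$ increments its $x_2$ coordinate. Once this identity is in hand, the remainder is bookkeeping — verifying that the two recurrences for $S$ and $R$ differ by exactly one application of Lemma~\ref{lem:utilityUpperBounds} to buyer~$1$.
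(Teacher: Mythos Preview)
Your proof is correct and follows essentially the same inductive scheme as the paper: same case split on the winner at $\x$, same use of Lemma~\ref{lem:utilityUpperBounds} to close each case, and the same telescoping via Lemma~\ref{lem:utilDifs} in the buyer-$2$ case. The only cosmetic differences are that you start the induction at $t(\x)=0$ rather than $t(\x)=1$, and that you apply Lemma~\ref{lem:utilDifs} uniformly at all of $\x,\x+\e_1,\ldots$ (obtaining $u_1(\x)$ in the identity) whereas the paper first expands $u_2(\x)$ via~(\ref{eqn:2winUtilities}) and then telescopes from $\x+\e_1$ (obtaining $u_1(\x+\e_2)$); since buyer~$2$ wins at $\x$ in this case, $u_1(\x)=u_1(\x+\e_2)$ and the two identities coincide.
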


\begin{proof}
Proceed by induction on $t(\x)$. In the base case, $t(\x) = 1$, so either $\x = (k-1,T-k)$ or $\x = (k,T-k-1)$. 
First suppose that $\x = (k-1,T-k)$. Then buyer~$1$ wins at decision node $\x$, and so:
$$u_2(\x) = u_2(\x+\e_1)= u_2(k,T-k)$$
But $(k,T-k)$ is a terminal node and so $u_2(k,T-k) = 0$. This implies that the LHS of (\ref{eqn:validUtil}) is 
$\sum_{j = 0}^{1} u_2(\x+j\e_1)= u_2(\x)+u_2(\x+\e_1)=0$. 
Because $x_2+1=T-k+1$, the RHS of (\ref{eqn:validUtil}) is the empty sum. Thus, the inequality holds with equality. 

Second, suppose  $\x = (k,T-k-1)$. Then buyer~$2$ wins at decision node $\x$. As $(k+1,T-k-1)$ is a terminal node, $u_2(k+1,T-k+1) = 0$. Therefore:
\begin{align*}
\sum_{j = 0}^{t(\x)} u_2(\x+j\e_1) 
& = u_2(k,T-k-1) + u_2(k+1,T-k+1) \\
& = u_2(k,T-k-1) \\
& = v_2(T-k) + u_2(k,T-k) + u_1(k,T-k) - v_1(k+1) - u_1(k+1,T-k-1) \\
& = v_2(T-k) - v_1(k+1) \\
& = \sum_{i = x_2+1}^{T-k} \Big[(T-x_1-i+1) \cdot v_2(i) - \sum_{j = k+1}^{T-i+1} v_1(j)\Big]
\end{align*}
Here the second equality holds because $(k+1,T-k-1)$ is a terminal node and so $u_2(k+1,T-k+1) = 0$.
The third equality holds by definition~(\ref{eqn:2winUtilities}). The fourth equality also holds because the forward utility at 
each terminal node is zero. The last inequality holds as $x_1=k$ and $x_2+1=T-k$. Therefore, (\ref{eqn:validUtil}) again holds with equality.

Now consider $t(\x) > 1$. First suppose that, on an equilibrium path $P$, buyer~$1$ wins at decision node $\x$. Then:
\begin{align*}
\sum_{j = 0}^{t(\x)} u_2(\x+j\e_1) & = u_2(\x) + \sum_{j = 1}^{t(\x)} u_2(\x+j\e_1) \\
& = u_2(\x) + \sum_{j = 0}^{t(\x+\e_1)} u_2\big((\x+\e_1)+j\e_1\big) \\
& \leq u_2(\x) + \sum_{i = x_2+1}^{T-k} \Big[\big(T-(x_1+1)-i+1\big) \cdot v_2(i) - \sum_{j = k+1}^{T-i+1} v_1(j)\Big] \\
& \leq \sum_{i = x_2+1}^{T-k} v_2(i) + \sum_{i = x_2+1}^{T-k} \Big[\big(T-(x_1+1)-i+1\big) \cdot v_2(i) - \sum_{j = k+1}^{T-i+1} v_1(j)\Big] \\
& = \sum_{i = x_2+1}^{T-k} \Big[(T-x_1-i+1) \cdot v_2(i) - \sum_{j = k+1}^{T-i+1} v_1(j)\Big] 
\end{align*}
Here the second equality holds by the fact that $t(\x+\e_1) = t(\x) - 1$. The first inequality follows from the 
induction hypothesis, because $P^{t(\x)-1}$ is an equilibrium path from $\x+\e_1$ to $(k,T-k)$. The second inequality arises from the upper bound on
the forward utility given by Lemma~\ref{lem:utilityUpperBounds}.

Next suppose buyer~$2$ wins at decision node $\x$. Then we have:
\begin{align}\label{eqn:inthm1}
\sum_{j = 0}^{t(\x)} u_2(\x+j\e_1) & = u_2(\x) + \sum_{j = 1}^{t(\x)} u_2(\x+j\e_1)  \\
& = v_2(x_2+1) + u_2(\x+\e_2) + u_1(\x+\e_2) - v_1(x_1 + 1) - u_1(\x+\e_1) + \sum_{j = 1}^{t(\x)} u_2(\x+j\e_1) \nonumber\\
& = v_2(x_2+1) + u_2(\x+\e_2) + u_1(\x+\e_2) - v_1(x_1 + 1) - u_1(\x+\e_1) + \sum_{j = 1}^{t(\x)-1} u_2(\x+j\e_1) \nonumber
\end{align}
Here the second equality holds by definition~(\ref{eqn:2winUtilities}). The third inequality holds as
$u_2(\x + t(\x)\e_1) = 0$ because $\x + t(\x)\e_1$ is a terminal node.

To simplify this we make repeated applications of Lemma~\ref{lem:utilDifs} on $- u_1(\cdot)$. In particular:
\begin{align}\label{eqn:inthm2}
-u_1(\x+\e_1) 
& = -u_2(\x+\e_1) + v_2(x_2+1) + u_2(\x+\e_2+\e_1) - v_1(x_1 + 2) - u_1(\x+2\e_1) \nonumber \\
& = -u_2(\x+\e_1) + v_2(x_2+1) + u_2(\x+\e_2+\e_1) - v_1(x_1 + 2) \nonumber \\
&\qquad -u_2(\x+2\e_1) + v_2(x_2+1) + u_2(\x+\e_2+2\e_1) - v_1(x_1 + 3) - u_1(\x+3\e_1) \nonumber\\
&\ \ \vdots \\
& = - \sum_{j = 1}^{t(\x)-1} u_2(\x+j\e_1) + (t(\x)-1) \cdot v_2(x_2+1) + \sum_{j = 1}^{t(\x)-1} u_2(\x+\e_2+j\e_1) - \sum_{j = x_1+2}^{T-x_2} v_1(j) \nonumber
\end{align}
Plugging (\ref{eqn:inthm2})  into (\ref{eqn:inthm1}) we obtain:
\begin{align*}
\sum_{j = 0}^{t(\x)} u_2(\x+j\e_1) 
&\ =\ t(\x) \cdot v_2(x_2+1) + u_1(\x+\e_2)  - \sum_{j = x_1+1}^{T-x_2} v_1(j) + \sum_{j = 0}^{t(\x)-1} u_2(\x+\e_2+j\e_1) \\
&\ \leq\ t(\x) \cdot v_2(x_2+1) + \sum_{j = x_1+1}^{k} v_1(j)  - \sum_{j = x_1+1}^{T-x_2} v_1(j) + \sum_{j = 0}^{t(\x)-1} u_2(\x+\e_2+j\e_1) \\
&\ =\ t(\x) \cdot v_2(x_2+1) - \sum_{j = k+1}^{T-x_2} v_1(j) + \sum_{j = 0}^{t(\x+\e_2)} u_2(\x+\e_2+j\e_1) \\
&\ \leq\ t(\x) \cdot v_2(x_2+1) - \sum_{j = k+1}^{T-x_2} v_1(j) + \sum_{i = x_2+2}^{T-k} \Big[(T-x_1-i+1) \cdot v_2(i) - \sum_{j = k+1}^{T-i+1} v_1(j)\Big] \\
&\ =\ \sum_{i = x_2+1}^{T-k} \Big[(T-x_1-i+1) \cdot v_2(i) - \sum_{j = k+1}^{T-i+1} v_1(j)\Big]
\end{align*}
Here the first inequality follows from Lemma \ref{lem:utilityUpperBounds}. The second equality holds as $t(\x+\e_2) = t(\x)-1$.
The second inequality follows from the induction hypothesis because $P^{t(\x)-1}$ is an equilibrium path from $\x+\e_2$ to $(k,T-k)$.
\end{proof}

Note that given an equilibrium path $P$ from $\0$ to $(k,T-k)$, Theorem~\ref{thm:pathValid} and Corollary~\ref{cor:nonNegUtils} imply a class of valid inequalities 
corresponding to each node $x^t$ of $P$. However these inequalities depend on the specific form of $P$ and we want 
inequalities valid for \emph{every} equilibrium path from $\0$ to $(k,T-k)$. 
The following theorem provides such valid inequalities.

\begin{theorem}\label{thm:validInequalities}
Suppose $P$ is an equilibrium path from $\0$ to $(k,T-k)$. 
Then for any $0 \leq \ell < T-k$:
\begin{equation}\label{eqn:validInequalities}
\sum_{i = \ell+1}^{T-k} \Big[(T-i+1) \cdot v_2(i) - \sum_{j = k+1}^{T-i+1} v_1(j)\Big] \geq 0
\end{equation}
\end{theorem}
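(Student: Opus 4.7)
The desired inequality~(\ref{eqn:validInequalities}) is exactly the right-hand side of~(\ref{eqn:validUtil}) specialized to $\x = (0,\ell)$. The natural attempt is thus to apply Theorem~\ref{thm:pathValid} at $(0,\ell)$ and combine with Corollary~\ref{cor:nonNegUtils} to conclude. The main obstacle is that Theorem~\ref{thm:pathValid} requires an equilibrium path from its base node all the way to the specific endpoint $(k,T-k)$, and the Markov perfect equilibrium starting at $(0,\ell)$ may well terminate at a different endpoint $(k',T-k')$, so no such path need exist.

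The workaround will be to apply Theorem~\ref{thm:pathValid} at a carefully chosen node on $P$ itself rather than at $(0,\ell)$. Observe that along $P = (\x^T,\x^{T-1},\dots,\x^0)$ the second coordinate begins at $0$ (at $\x^T = \0$), ends at $T-k$ (at $\x^0 = (k,T-k)$), and changes by $0$ or $1$ at each step. Hence for every $\ell \in \{0,1,\dots,T-k-1\}$ there is an index $s$ with $x^s_2 = \ell$; fix such an $s$ and write $\x^s = (a,\ell)$ with $0 \le a \le k$. Since $a + \ell \le k + (T-k-1) < T$, the node $\x^s$ is a decision node, and $P^s$ is an equilibrium path from $\x^s$ to $(k,T-k)$.

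Applying Theorem~\ref{thm:pathValid} at $\x^s$ with the sub-path $P^s$ yields
\[
\sum_{j = 0}^{T-a-\ell} u_2(\x^s + j\e_1) \ \le\ \sum_{i = \ell+1}^{T-k} \Big[(T - a - i + 1)\cdot v_2(i) - \sum_{j = k+1}^{T-i+1} v_1(j)\Big],
\]
and by Corollary~\ref{cor:nonNegUtils} the left-hand side is non-negative. Adding the trivially non-negative quantity $a \cdot \sum_{i = \ell+1}^{T-k} v_2(i)$ to the right-hand side promotes each coefficient $(T-a-i+1)$ to $(T-i+1)$ and produces exactly~(\ref{eqn:validInequalities}). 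In brief: routing through an on-path node with second coordinate $\ell$, rather than through $(0,\ell)$ itself, gives a weaker-looking inequality whose slack is a non-negative linear combination of $v_2$ values, so the targeted stronger inequality follows for free.
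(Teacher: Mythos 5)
Your proposal is correct and matches the paper's own proof: both locate an on-path decision node $(x_\ell,\ell)$ on $P$, apply Theorem~\ref{thm:pathValid} together with Corollary~\ref{cor:nonNegUtils} to its final segment, and then add the non-negative term $x_\ell\cdot\sum_{i=\ell+1}^{T-k} v_2(i)$ to recover the stated inequality. Your write-up simply makes explicit a few details the paper leaves implicit (why such a node exists, why it is a decision node, and why one cannot apply the theorem at $(0,\ell)$ directly).
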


\begin{proof}
Let $P$ be an equilibrium path from $\0$ to $(k,T-k)$, and $0 \leq \ell < T-k$ an integer. Observe that $P$ must contain a 
decision node $(x_\ell, \ell)$; otherwise, the endpoint of the path $P$ could not have been $(k,T-k)$. Now, by  Theorem~\ref{thm:pathValid} and Corollary~\ref{cor:nonNegUtils}:
$$\sum_{i = \ell +1}^{T-k} \Big[(T-x_\ell-i+1) \cdot v_2(i) - \sum_{j = k+1}^{T-i+1} v_1(j)\Big] \geq 0$$
On the other hand, as incremental valuations are non-negative and $x_\ell \geq 0$:
$$ \sum_{i = \ell + 1}^{T-k} x_\ell \cdot v_2(i) \geq 0$$
Summing up the two inequalities yields the desired result.
\end{proof}

To conclude this section, we write inequality~(\ref{eqn:validInequalities}) in a more amenable form. 
Specifically, for any given $T \in \mathbb{N}$, $0 \leq k \leq T$ and $0 \leq \ell < T-k$, an equivalent formulation is:
\begin{equation}\label{eqn:valid}
\sum_{i = \ell+1}^{T-k} (T-i+1) \cdot v_2(i) - \sum_{i = k + 1}^{T-\ell} (T-i-\ell+1) \cdot v_1(i) \geq 0
\end{equation}
In the remainder of this paper, we will show that the addition of the valid inqualities~(\ref{eqn:valid}) will allow us to obtain sharp
bounds on the efficiency of sequential auctions for both concave and general valuation functions. 

\section{The Price of Anarchy with Concave Valuation Functions}\label{sec:concave}

In this section, we prove that the price of anarchy is exactly $\big(1-\frac{1}{e}\big)$ when both buyers have
concave valuation functions. We begin by deriving a lower bound conditional on the final allocation.

\begin{theorem}\label{thm:concaveLP-lower}
Let each buyer have a non-decreasing, concave valuation function.
If $(T,0)$ is an optimal allocation and $P$ is an 
equilibrium path from $\0$ to $(k,T-k)$ then:
$$\Gamma(P) \geq \frac{1}{T} \Bigg( k + \sum_{j = 1}^{T-k} \frac{j}{k+j}\Bigg)$$
\end{theorem}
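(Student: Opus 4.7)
My approach is via weak LP duality applied to the linear program~(\ref{LP:unrestricted}), augmented with the concavity constraints $v_r(j) \geq v_r(j+1)$ for $r \in \buyset, j \in [T-1]$ and the valid inequalities~(\ref{eqn:valid}) from Theorem~\ref{thm:validInequalities}. After normalizing so $\opt(\0) = 1$ (whence $V_1(T) = \sum_{j=1}^T v_1(j) = 1$ and $\Gamma(P) = \sw(k|\0) = V_1(k) + V_2(T-k)$), I aim to exhibit an explicit dual feasible solution whose objective equals the claimed bound. Attach multipliers $\mu$ (free) to the equality $V_1(T) = 1$; $\alpha_l \geq 0$ to each budget constraint $V_1(l) + V_2(T-l) \leq 1$; $\beta_\ell \geq 0$ to each valid inequality~(\ref{eqn:valid}); and $\gamma_{r,j} \geq 0$ to each concavity constraint. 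The dual objective is then $\mu - \sum_l \alpha_l$, which by weak duality lower bounds $\sw$.

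The candidate dual is guided by a conjectured extremal primal, namely $v_1 \equiv 1/T$ and $v_2(j) = \frac{T-k-j+1}{T(T-j+1)}$ for $j \leq T-k$ (and $v_2(j) = 0$ otherwise), which one verifies to be concave with $V_1(T) = 1$ and achieving $\sw = \frac{1}{T}(k + \sum_{j=1}^{T-k}\frac{j}{k+j})$. Complementary slackness suggests $\alpha_l = 0$ and $\gamma_{2,j} = 0$ for all indices. The key design step is to choose the $\beta_\ell$ so that their partial sums telescope harmonically: take $\beta_0 = 1/T$ and $\beta_\ell = \frac{1}{(T-\ell)(T-\ell+1)}$ for $1 \leq \ell \leq T-k-1$, giving $\sum_{\ell=0}^{j-1}\beta_\ell = \frac{1}{T-j+1}$. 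With this choice, the dual constraint indexed by each $v_2(j)$ with $j \leq T-k$ collapses to the identity $(T-j+1) \cdot \frac{1}{T-j+1} = 1$ and so holds with equality.

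The dual constraints for $v_1(j)$ then determine $\gamma_{1,j}$ by recursion: for $j \leq k$ it reads $\gamma_{1,j} - \gamma_{1,j-1} = 1 - \mu$, yielding $\gamma_{1,j} = j(1-\mu)$; for $j > k$ the recursion is $\gamma_{1,j} - \gamma_{1,j-1} = f(j) - \mu$ where $f(j) := \sum_{\ell=0}^{T-j}\beta_\ell(T-\ell-j+1)$ simplifies, via the partial-fraction identity $\frac{m-j+1}{m(m+1)} = \frac{1-j}{m} + \frac{j}{m+1}$ and harmonic telescoping, to $f(j) = \frac{1}{T} + H_{T-1} - H_{j-1}$, where $H_m := \sum_{i=1}^m \frac{1}{i}$. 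The boundary condition at $j = T$ (no $\gamma_{1,T}$ exists) then pins down
$$\mu = 1 - \frac{k}{T}(H_T - H_k) = \frac{1}{T}\left(k + \sum_{j=1}^{T-k}\frac{j}{k+j}\right),$$
which is exactly the claimed bound.

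The remaining work, and what I expect to be the main obstacle, is verifying dual feasibility: that every $\gamma_{1,j} \geq 0$ under this choice of $\mu$ (a tedious calculation on the harmonic-telescoping recursion, using the monotonicity of $f(j)$ in $j$), and that the slack dual constraints indexed by $v_2(j)$ for $j > T-k$ are satisfied (trivially, given $\gamma_{2,j} = 0$). Once these are confirmed, weak LP duality yields $\sw \geq \mu$, completing the proof.
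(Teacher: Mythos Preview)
Your proposal is correct and is essentially the paper's own proof: the paper also augments~(\ref{LP:unrestricted}) with the concavity constraints and the valid inequalities~(\ref{eqn:valid}), passes to the dual, and exhibits exactly your dual solution (their $\sigma_T$, $\mu_\ell$, $\kappa_{1,i}$ are your $\mu$, $\beta_\ell$, $-\gamma_{1,i}$, with all other duals zero), verifying the same constraints with equality and then checking the sign of $\kappa_{1,i}$ as the only nontrivial feasibility step. Your harmonic-number bookkeeping and the motivation via the conjectured extremal primal (which is precisely the construction the paper uses for the matching upper bound) are cosmetic differences only.
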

\begin{proof}
Recall buyer~$i$ has a concave valuation function if and only if the incremental valuations satisfy $v_i(j)\ge v_i(j+1)$ 
for all $1\le j\le T-1$.
The addition of these inequalities plus the valid inequalities~(\ref{eqn:valid}) to 
linear program (\ref{LP:unrestricted}) yields:
\begin{alignat}{2}
\text{minimize}\quad \sum_{j = 1}^{k} v_1(j) + \sum_{j = 1}^{T-k} v_2(j) &&&\\
\text{subject to}\ \quad\quad\quad\quad\quad\quad \sum_{j = 1}^T v_1(j) &= 1 &&\nonumber \\
\sum_{j = 1}^l v_1(j) + \sum_{j = 1}^{T-l} v_2(j) &\leq 1 &\quad\quad&\forall\, 0 \leq l < T \nonumber \\
 \sum_{i = \ell+1}^{T-k} (T-i+1) \cdot v_2(i) - \sum_{i = k + 1}^{T-\ell} (T-i-\ell+1) \cdot v_1(i) &\geq 0 &\quad\quad&\forall \ 0 \leq \ell < T-k \nonumber\\
v_i(j+1) - v_i(j) &\leq 0 &\quad\quad&\forall i \in \buyset, j \in [T-1] \nonumber  \\
v_i(j) &\geq 0 							&\quad\quad&\forall i \in \buyset, j \in [T] \nonumber 
\end{alignat}
As discussed, this linear program provides a lower bound on the efficiency when $(T,0)$ is an optimal allocation with $\opt(\0) = 1$
and there is an equilibrium path from $\0$ to $(k,T-k)$.
Further, by weak duality we may lower bound this primal LP by considering its dual LP.

In the dual LP, we assign a dual variable $\sigma_l$ to the welfare constraint for when buyer~$1$ wins $l$ items (for $0 \leq l \leq T$).
We assign a dual variable $\kappa_{i,j}$ for each concavity constraint $j \in [T-1]$ of buyer~$i$; for convenience we also set $\kappa_{i,0} = \kappa_{i,T} = 0$.
Finally, we have a dual variable $\mu_\ell$ for the valid inequalities of type~(\ref{eqn:valid}) for $0 \leq \ell < T-k$.
The dual linear program is then:
\begin{alignat}{2}
\text{maximize}\quad\quad  \sigma_T - \sum_{l = 0}^{T-1} \sigma_l \quad\quad\quad\quad &&& \nonumber\\
\text{subject to}\quad\quad \sum_{l = i}^T \sigma_l - \kappa_{1,i} + \kappa_{1,i-1} &\leq 1 &\quad\quad &\forall \ 1 \leq i \leq k  \label{cons:1}\\
\sum_{l = i}^T \sigma_l - \kappa_{1,i} + \kappa_{1,i-1} - \sum_{\ell = 0}^{T-i} (T-i-\ell+1)\cdot\mu_\ell &\leq  0 &\quad\quad &\forall \ k+1 \leq i \leq T \label{cons:2} \\
\sum_{l = 0}^{T-i} \sigma_l - \kappa_{2,i} + \kappa_{2,i-1} + \sum_{\ell = 0}^{i-1} (T-i+1)\cdot\mu_\ell &\leq 1 &\quad\quad &\forall \ 1 \leq i \leq T-k  \label{cons:3}\\
 \sum_{l = 0}^{T-i} \sigma_l - \kappa_{2,i} + \kappa_{2, i-1} &\leq 0 &\quad\quad &\forall \ T-k+1 \leq i \leq T  \label{cons:4}\\
\sigma_T &\in \mathbb{R} & & \nonumber \\
\sigma_l &\leq 0  &\quad\quad &\forall \ 0 \leq l < T \nonumber \\
\kappa_{i,j} &\leq 0 &\quad\quad&\forall i \in \buyset, j \in [T-1]  \label{cons:5}\\
\kappa_{i,j} &= 0 &\quad\quad&\forall i \in \buyset, j \in \{0,T\}\nonumber \\
\mu_{\ell} &\geq 0 &\quad\quad&\forall \ 0 \leq \ell < T-k \nonumber 
\end{alignat}
Consider the dual solution given by:
\begin{align}\label{dual-solution}
\sigma_T & = \frac{1}{T} \Bigg( k + \sum_{j = 1}^{T-k} \frac{j}{k+j} \Bigg)\nonumber \\
\mu_0 & = \frac{1}{T} \nonumber \\
\mu_\ell & = \frac{1}{T-\ell} - \frac{1}{T-\ell+1} \qquad\qquad\qquad\qquad\quad \forall \ 0 < \ell < T-k \nonumber\\
\kappa_{1,i} & = 
\begin{cases}
 i \cdot (\sigma_T - 1) &\qquad 0 < i \leq k \\
-(T-i) \cdot \sigma_T  + \sum_{j = 0}^{T-i-1} \frac{T-i-j}{T-j} &\qquad k \leq i \leq T-1 \\
\end{cases} 
\end{align}
and all other dual variables set to $0$. 
This solution is dual feasible, with the first four inequalities all holding with equality. To see this, note that for $1 \leq i \leq k$:
\begin{align*}
\sum_{l = i}^T \sigma_l - \kappa_{1,i} + \kappa_{1,i-1} 
&\ =\ \sigma_T + i \cdot (\sigma_T-1) - (i-1) \cdot (\sigma_T-1) 
\ =\  1
\end{align*}
Thus the constraints (\ref{cons:1}) are satisfied with equality. 
To show that the constraints (\ref{cons:2}) hold, observe that for $k+1 \leq i \leq T$:
\begin{align}
\sum_{l = i}^T \sigma_l& - \kappa_{1,i} + \kappa_{1,i-1} - \sum_{\ell = 0}^{T-i} (T-i-\ell+1)\cdot\mu_\ell  \nonumber \\
&\ =\ \sigma_T + (T-i) \cdot \sigma_T - \sum_{j = 0}^{T-i-1} \frac{T-i-j}{T-j} - (T-i+1) \cdot \sigma_T  \nonumber \\
&\ \ \ \ + \sum_{j = 0}^{T-i} \frac{T-i+1-j}{T-j} - \sum_{\ell = 0}^{T-i} (T-i-\ell+1)\cdot\mu_\ell \nonumber \\
&\ =\ \sum_{j = 0}^{T-i} \frac{1}{T-j}   - \sum_{\ell = 0}^{T-i} (T-i-\ell+1)\cdot\mu_\ell \label{eq:cons2a}
\end{align}
Now, by definition of $\mu_\ell$, we have for $k+1 \leq i < T$ that:
\begin{align*}
\sum_{\ell = 0}^{T-i} (T-i-\ell+1)\cdot\mu_\ell 
&\ =\ (T-i+1)\cdot \frac{1}{T} +  \sum_{\ell = 1}^{T-i} (T-i-\ell+1)\cdot  \left(\frac{1}{T-\ell} - \frac{1}{T-\ell+1} \right)\\
&\ =\ (T-i+1)\cdot \frac{1}{T} +  \sum_{\ell = 1}^{T-i} (T-\ell+1)\cdot  \left(\frac{1}{T-\ell} - \frac{1}{T-\ell+1} \right) \\
& \ \ \ \ - i\cdot  \sum_{\ell = 1}^{T-i} \left(\frac{1}{T-\ell} - \frac{1}{T-\ell+1} \right)\\
&\ =\ (T-i+1)\cdot \frac{1}{T} +\sum_{\ell=1}^{T-i} \frac{1}{T-\ell} - i\cdot   \left(\frac{1}{i} - \frac{1}{T} \right)\\
&\ =\  \frac{1}{T} +\sum_{\ell=1}^{T-i} \frac{1}{T-\ell} \\
&\ =\  \sum_{\ell=0}^{T-i} \frac{1}{T-\ell} 
\end{align*}
Note also for $i=T$ it also holds that:
$$\sum_{\ell = 0}^{T-i} (T-i-\ell+1)\cdot\mu_\ell \ =\  \mu_0 \ =\  \frac{1}{T} \ =\ \sum_{\ell=0}^{T-i} \frac{1}{T-\ell}$$
Thus, plugging this into (\ref{eq:cons2a}), we have for $k+1 \leq i \leq T$:
\begin{align*}
& \ \ \  \sum_{l = i}^T \sigma_l - \kappa_{1,i} + \kappa_{1,i-1} - \sum_{\ell = 0}^{T-i} (T-i-\ell+1)\cdot\mu_\ell \ =\ 0
\end{align*}
So the constraints (\ref{cons:2}) are satisfied with equality. Next we show the constraints (\ref{cons:3}) also hold with equality. 
Because, the sum of $\mu_\ell$'s are telescoping, we obtain for $1 \leq i \leq T-k$:
\begin{align*}
\sum_{l = 0}^{T-i} \sigma_l & - \kappa_{2,i} + \kappa_{2,i-1} + \sum_{\ell = 0}^{i-1} (T-i+1)\cdot\mu_\ell \\ 
& = (T-i+1) \cdot \Bigg( \frac{1}{T} +\Big(\frac{1}{T-1} - \frac{1}{T}\Big) +  \Big(\frac{1}{T-2} - \frac{1}{T-1}\Big) + ... + \Big(\frac{1}{T-i+1} - \frac{1}{T-i+2}\Big) \Bigg) \\
& = (T-i+1) \cdot \frac{1}{T-i+1} \\
& = 1
\end{align*}
Furthermore the constraints (\ref{cons:4}) hold with equality. That is, $\sum_{l = 0}^{T-i} \sigma_l - \kappa_{2,i} + \kappa_{2, i-1} = 0$  for
$T-k+1 \leq i \leq T$, since every term of the equation is equal to zero. 

Now, note that we have provided two alternative definitions for $\kappa_{1,k}$ in the dual solution. They are indeed equivalent, as:
\begin{align}\label{k-okay}
k \cdot (\sigma_T-1) - \left( -\left(T-k\right) \cdot \sigma_T + \sum_{j = 0}^{T-k-1} \frac{T-k-j}{T-j} \right) 
& = T \cdot \sigma_T - k + \sum_{j = 1}^{T-k} \frac{j}{k+j} \nonumber\\
& = \left( k + \sum_{j = 1}^{T-k} \frac{j}{k+j} \right) - k + \sum_{j = 1}^{T-k} \frac{j}{k+j} \nonumber \\
& = 0
\end{align}
Finally, it remains to show that the signs of the dual variables are correct. This is trivial to see, except for the variables $\kappa_{1,i}$.
To verify these, observe first that $\sigma_T$ is the average of $T$ numbers in $[0,1]$:
$$\sigma_T = \frac{1}{T} \left( k \cdot 1 + \sum_{j = 1}^{T-k} \frac{j}{k+j} \right) \leq 1$$
Hence $\kappa_{1,i} = i \cdot (\sigma_T-1)\leq 0$ for $0 < i \leq k$. 
 It remains to show that $\kappa_{1,i} \leq 0$ for $k < i < T-1$. To see this, note that:
\begin{align*}
\sigma_T 
&\ =\ \frac{1}{T} \cdot\left( k \cdot 1 + \sum_{j = 1}^{T-k} \frac{j}{k+j} \right) \\
&\ \geq\ \frac{1}{T-k} \cdot \sum_{j = 1}^{T-k} \frac{j}{k+j} \\
&\ \geq\ \frac{1}{T-i} \cdot\sum_{j = 1}^{T-i} \frac{j}{k+j} \\
&\ \geq\ \frac{1}{T-i} \cdot\sum_{j = 1}^{T-i} \frac{j}{i+j} \\
&\ =\ \frac{1}{T-i} \cdot \sum_{j = 0}^{T-i-1} \frac{T-i-j}{T-j}
\end{align*}
Here, we interpret $\sigma_T$ as an average of a set of numbers, and lower bound $\sigma_T$ by repeatedly 
considering sets of numbers with lower mean. Noting that $\frac{j}{k+j} < 1$ for any $j$, we obtain the first inequality. 
The second inequality, in turn, arises from the facts that $\frac{j}{k+j}$ is increasing in $j$ and $T-k \geq T-i$. 
The third inequality holds since $i>k$. The last equality is then obtained by rearranging the sum. Then we have, for $k < i < T-1$:
$$\kappa_{1,i} \ =\  -(T-i) \cdot \sigma_T  + \sum_{j = 0}^{T-i-1} \frac{T-i-j}{T-j} \ \leq\  0$$

Thus the constraints (\ref{cons:5}) hold and the dual solution is feasible, as claimed. 
This dual solution has value $\sigma_T=\frac{1}{T}\big( k + \sum_{j = 1}^{T-k} \frac{j}{k+j}\big)$.
It follows that the efficiency is at least $\frac{1}{T}\big( k + \sum_{j = 1}^{T-k} \frac{j}{k+j}\big)$.
\end{proof}

\begin{theorem}\label{thm:concaveLP-upper}
There exists a $2$-buyer sequential auction with the following properties: both buyers have non-decreasing, concave valuation functions, 
the allocation $(T,0)$ maximizes social welfare and there is an equilibrium path $P$ from $\0$ to $(k,T-k)$ with:
$$\Gamma(P) = \frac{1}{T} \Bigg( k + \sum_{j = 1}^{T-k} \frac{j}{k+j}\Bigg)$$
\end{theorem}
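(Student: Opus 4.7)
My plan is to construct an explicit two-buyer sequential auction attaining the lower bound. The construction is guided by complementary slackness applied to the dual solution~(\ref{dual-solution}) used in Theorem~\ref{thm:concaveLP-lower}: since $\kappa_{1,i} < 0$ for $i \neq k$, every concavity constraint of buyer~$1$ should bind, so $v_1$ must be constant; and since $\mu_\ell > 0$ for all $0 \le \ell < T-k$, every valid inequality~(\ref{eqn:valid}) should hold with equality, which (by subtracting successive instances) pins down $v_2$ up to scale.

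Concretely, I set $v_1(j) = 1$ for $j \in [T]$, $v_2(j) = \tfrac{T-k-j+1}{T-j+1}$ for $1 \le j \le T-k$, and $v_2(j) = 0$ for $j > T-k$. A short calculation gives $v_2(j) - v_2(j+1) = \tfrac{k}{(T-j+1)(T-j)} \ge 0$ for $j < T-k$ and $v_2(T-k) = \tfrac{1}{k+1} > 0 = v_2(T-k+1)$, so both valuations are non-decreasing and concave. To see that $(T,0)$ is an optimal allocation, I check $\sw(\ell+1) - \sw(\ell) = 1 - v_2(T-\ell) \ge 0$ for every $\ell$, using $v_2(T-\ell) = \tfrac{\ell-k+1}{\ell+1} \le 1$ when $\ell \ge k$ and $v_2(T-\ell) = 0$ when $\ell < k$.

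The heart of the argument is to exhibit an equilibrium path from $\0$ to $(k,T-k)$. I propose proving by backward induction on $t(\x)$ the joint claim that (i) $u_2(\x) = 0$ for every $\x \in \h$, and (ii) $u_1(x_1,x_2) = \tfrac{k(T-x_1-x_2)}{T-x_2}$ whenever $0 \le x_2 \le T-k$ (and $u_1(x_1,x_2) = T-x_1-x_2$ when $x_2 > T-k$; the two formulas agree on the boundary $x_2 = T-k$). Granting these, the bids at any decision node $(0,m)$ with $0 \le m < T-k$ satisfy
\[b_1(0,m) = 1 + u_1(1,m) - u_1(0,m+1) = 1 - \tfrac{k}{T-m} = v_2(m+1) = b_2(0,m),\]
so the two buyers tie; whereas at every node $(j,T-k)$ with $0 \le j < k$, buyer~$2$ has no value for further items, so buyer~$1$ strictly wins. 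Breaking ties in favour of buyer~$2$ at the $T-k$ tied decision nodes then realises the equilibrium path $(0,0) \to (0,1) \to \cdots \to (0,T-k) \to (1,T-k) \to \cdots \to (k,T-k)$.

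The main obstacle is the inductive verification of the $u_1$ formula at every node of the tree, not just on the path: computing $u_1(0,m)$ via Claim~\ref{cl:maxForm} requires $u_1(1,m)$, which in turn requires $u_1(2,m)$ and $u_1(1,m+1)$, so one must simultaneously track the formula for all $(x_1,x_2)$. Once it is established, the efficiency calculation is immediate: the social welfare along the path is $k + \sum_{j=1}^{T-k} v_2(j)$, and the substitution $j \mapsto T-k-j+1$ converts $\sum_{j=1}^{T-k} \tfrac{T-k-j+1}{T-j+1}$ into $\sum_{j=1}^{T-k} \tfrac{j}{k+j}$; since $\opt(\0) = T$, the efficiency equals $\tfrac{1}{T}\bigl(k + \sum_{j=1}^{T-k} \tfrac{j}{k+j}\bigr)$, matching Theorem~\ref{thm:concaveLP-lower}.
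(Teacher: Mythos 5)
Your proposal is correct and follows essentially the same route as the paper: the identical valuation profile ($v_1 \equiv 1$, $v_2(j) = \tfrac{T-k-j+1}{T-j+1}$ truncated at $T-k$), the same closed form for the forward utilities (your $\tfrac{k(T-x_1-x_2)}{T-x_2}$ is the paper's $(T-x_1-x_2)(1-v_2(x_2+1))$), the same tie structure at the nodes $(0,m)$, and the same equilibrium path obtained by breaking ties for buyer~$2$. The paper likewise asserts the utility formulas as the result of a routine backward-induction computation rather than writing it out, so your level of detail matches; the complementary-slackness motivation you give for guessing the instance is a nice addition but not part of the paper's argument.
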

\begin{proof}
Consider a sequential auction with the following valuation profiles:
\begin{align*}
v_1(j) & = 1 \qquad\qquad\qquad 1 \leq j \leq T \\
v_2(j) & = \begin{cases}
\frac{T-k-j+1}{T-j+1} &\quad 1 \leq j \leq T-k \\
0 &\quad \text{else}
\end{cases}
\end{align*}
Observe that the unique optimal allocation is $(T,0)$ with social welfare $\opt(\0) = T$. 
Meanwhile, $\sw(k|\0) = k + \sum_{j = 1}^{T-k} \frac{j}{k+j}$. Therefore, it suffices to show that there exists an 
equilibrium path from $\0$ to $(k,T-k)$. Computation of the forward utilities yields, for any decision node $\x=(x_1,x_2)$:
\begin{align*}
u_1(x_1,x_2) & = (T-x_1-x_2) \cdot \big(1-v_2(x_2+1)\big) \\
u_2(x_1,x_2) & = 0
\end{align*}
In particular, the bidding strategies are:
\begin{align*}
b_1(\x) & = \begin{cases}
1-\frac{(T-x_1-x_2-1) \cdot k}{(T-x_2-1) \cdot (T-x_2)} &\quad x_2 < T-k \\
1 &\quad x_2 \geq T-k \end{cases} \\
b_2(\x) & = v_2(x_2+1) \\
& = \begin{cases}
1 - \frac{k}{T-x_2} &\quad x_2 < T-k \\
0 &\quad x_2 \geq T-k
\end{cases}
\end{align*}
This implies that, for any $0 \leq \ell < T-k$, $b_1(0,\ell) = b_2(0,\ell)$. 
At any other decision node $\x$, we have $b_1(\x) > b_2(\x)$. Hence by breaking all ties 
in favour of buyer~$2$, we obtain an equilibrium path $P$ from $\0$ to $(k,T-k)$ on which buyer~$2$ 
wins the first $T-k$ items and buyer~$1$ wins the last $k$ items.
\end{proof}

These conditional bounds readily extend to an asymptotically tight constant lower bound for efficiency.
\begin{theorem}
Given non-decreasing, concave valuation functions. For any $T \in \mathbb{N}$, any equilibrium 
path $P$ from $\0$ has efficiency at least $1-\frac{1}{e}$. This bound is asymptotically tight 
as $T \rightarrow \infty$.
\end{theorem}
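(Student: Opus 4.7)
The plan is to combine the conditional lower bound of Theorem~\ref{thm:concaveLP-lower} with a calculus optimisation, and then invoke Theorem~\ref{thm:concaveLP-upper} to witness asymptotic tightness.

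First, by the reduction described after Corollary~\ref{cor:pathEfficiencyBound}, I would assume without loss of generality that $(T,0)$ is the unique optimal allocation and that the equilibrium path $P$ terminates at some $(k,T-k)$ with $0 \leq k < T$; the case where $(0,T)$ is optimal follows by relabelling the buyers, and if neither holds then $\Gamma(P) = 1$ already. Theorem~\ref{thm:concaveLP-lower} then gives
$$\Gamma(P) \ \geq\ \frac{1}{T}\Bigg(k + \sum_{j = 1}^{T-k} \frac{j}{k+j}\Bigg).$$

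Next, I would massage the right-hand side into a form accessible to calculus. Using $\frac{j}{k+j} = 1 - \frac{k}{k+j}$ collapses the sum into a harmonic difference: writing $H_n = \sum_{j=1}^{n} 1/j$, one finds $\sum_{j=1}^{T-k}\frac{j}{k+j} = (T-k) - k\,(H_T - H_k)$, so the bound reads
$$\Gamma(P) \ \geq\ 1 - \frac{k}{T}\bigl(H_T - H_k\bigr).$$
For $k \geq 1$ the integral comparison $H_T - H_k \leq \int_k^T \frac{dx}{x} = \ln(T/k)$ then yields
$$\Gamma(P) \ \geq\ 1 + \frac{k}{T}\ln\frac{k}{T} \ \geq\ 1 - \max_{u \in (0,1]}(-u\ln u) \ =\ 1 - \frac{1}{e},$$
where the standard calculus of $g(u) = -u\ln u$ (with $g'(u) = -\ln u - 1$) places the maximum at $u = 1/e$ with value $1/e$. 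The case $k = 0$ is immediate, since the original bound evaluates to $1$.

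For tightness, I would apply Theorem~\ref{thm:concaveLP-upper} with $k = \lfloor T/e \rfloor$; this produces a sequence of auctions with equilibrium paths $P_T$ attaining efficiency $1 - \frac{k}{T}(H_T - H_k)$. Since $k/T \to 1/e$ and $H_T - H_k = \ln(T/k) + O(1/k) \to \ln e = 1$ as $T \to \infty$, the efficiency converges to $1 - 1/e$, matching the lower bound. The only real obstacle is bookkeeping: one must recognise the collapse of $\sum \frac{j}{k+j}$ into harmonic numbers and identify $\max(-u\ln u)$ as the worst-case limit; once these two observations are in place, both the bound and its tightness follow in a few lines.
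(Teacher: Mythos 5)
Your proof is correct and follows essentially the same route as the paper: reduce to the case where $(T,0)$ is uniquely optimal, invoke Theorem~\ref{thm:concaveLP-lower}, bound the resulting sum by an integral comparison to arrive at $\inf_{\alpha\in[0,1]}(1+\alpha\ln\alpha)=1-\tfrac{1}{e}$, and witness tightness via Theorem~\ref{thm:concaveLP-upper} with $k=\lfloor T/e\rfloor$. The only difference is cosmetic: you first collapse $\sum_{j=1}^{T-k}\frac{j}{k+j}$ into harmonic numbers and use $H_T-H_k\leq\ln(T/k)$, whereas the paper bounds the sum directly as an upper Darboux sum of $\int_0^{1-k/T}\frac{1-k/T-x}{1-x}\,dx$; both reduce to the same function of $k/T$.
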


\begin{proof}
Fix $T$ and let $P$ be an equilibrium path from $\0$. By Corollary~\ref{cor:pathEfficiencyBound}, we may assume
that $(T,0)$ is the unique optimal allocation. If buyer~$1$ wins $T$ 
items then $\Gamma(P) = 1 > 1 - \frac{1}{e}$. So suppose that buyer~$1$ wins $k < T$ items. By 
Theorem~\ref{thm:concaveLP-lower}, we have $\Gamma(P) \geq \frac{1}{T}\big( k + \sum_{j = 1}^{T-k} \frac{j}{k+j}\big)$.

Next, observe that:
\begin{align*}
\frac{1}{T}\cdot \sum_{j = 1}^{T-k} \frac{j}{k+j} 
&\ =\ \sum_{j = 0}^{T-k-1} \frac{1}{T} \cdot \frac{T-k-j}{T-j} \\
&\ =\ \sum_{j = 0}^{T-k-1} \frac{1}{T} \cdot \frac{1-k/T-j/T}{1-j/T} \\
&\ \geq\ \int_{0}^{1-k/T} \frac{1-k/T-x}{1-x} \,\mathop{dx}
\end{align*}
The inequality holds as the second line is an {\em upper Darboux sum}. Therefore:
\begin{align*}
\Gamma(P) 
&\ \geq\ \min_{k \in [T] \cup \{0\}} \,\frac{1}{T}\cdot \Bigg( k + \sum_{j = 1}^{T-k} \frac{j}{k+j}\Bigg) \\
&\ \geq\ \min_{k \in [T] \cup \{0\}} \,\frac{k}{T} + \int_{0}^{1-k/T} \frac{1-k/T-x}{1-x} \,\mathop{dx}\\
&\ \geq\ \inf_{\alpha \in [0,1]} \,\alpha + \int_{0}^{1-\alpha} \frac{1-\alpha-x}{1-x} \,\mathop{dx} \\
&\ =\ \inf_{\alpha \in [0,1]}\, 1 + \alpha \ln \alpha
\end{align*}
The infimum is attained for $\alpha = \frac{1}{e}$, with value $1-\frac{1}{e}$. Setting $k = \lfloor T/e \rfloor$ for the valuations 
given in Theorem~\ref{thm:concaveLP-upper} shows the asymptotic tightness of the bound.
\end{proof}

\section{The Price of Anarchy with General Valuation Functions}\label{sec:general}

In this section we consider the case of general non-decreasing valuation functions. 
We show that the price of anarchy is then exactly $1/T$.  In particular, this bound is no longer a constant
but deteriorates linearly with the number of items for sale in the auction.
This value was first identified in \cite{BBB08} as an upper bound for 
the price of anarchy for general non-decreasing valuations.
Again, we begin with the lower bound, and then present the matching upper bound.
\begin{theorem}\label{thm:gen-upper}
Let the buyers have non-decreasing valuation functions. Then any equilibrium path has 
efficiency at least $1/T$, where $T$ is the number of items.
\end{theorem}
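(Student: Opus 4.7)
The plan is to use the same LP approach as in the concave case but to drop the concavity constraints, and to certify the bound by the trivial dual solution $\sigma_T = \mu_0 = 1/T$ with all other dual variables zero. Equivalently, I will combine the single valid inequality~(\ref{eqn:valid}) with index $\ell = 0$ and the normalization $\sum_j v_1(j) = 1$ directly, by coarse coefficient bounding.

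First, by Corollary~\ref{cor:pathEfficiencyBound} I may reduce to the case where the unique optimal allocation is $(T,0)$ but the equilibrium path $P$ ends at $(k, T-k)$ with $0 \le k < T$ (if $k = T$ then $\Gamma(P) = 1 \ge 1/T$). After scaling so that $\opt(\0) = \sum_{j=1}^T v_1(j) = 1$, the efficiency along $P$ is exactly $\Gamma(P) = \sum_{j=1}^k v_1(j) + \sum_{j=1}^{T-k} v_2(j)$, and the task is to show this quantity is at least $1/T$.

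Next, apply Theorem~\ref{thm:validInequalities} with $\ell = 0$. Since $1 \le T-i+1 \le T$ for every $1 \le i \le T$, and the increments are non-negative, I obtain
\[
T \sum_{i=1}^{T-k} v_2(i) \;\ge\; \sum_{i=1}^{T-k}(T-i+1)\,v_2(i) \;\ge\; \sum_{i=k+1}^{T}(T-i+1)\,v_1(i) \;\ge\; \sum_{i=k+1}^{T} v_1(i) \;=\; 1 - \sum_{i=1}^{k} v_1(i).
\]
Writing $S = \sum_{i=1}^{k} v_1(i) \in [0,1]$, this rearranges to $\sum_{i=1}^{T-k} v_2(i) \ge (1-S)/T$, and hence
\[
\Gamma(P) \;\ge\; S + \frac{1-S}{T} \;=\; \frac{1}{T} + \Bigl(1 - \frac{1}{T}\Bigr) S \;\ge\; \frac{1}{T},
\]
as required.

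Since all of the structural work is already carried out in Theorems~\ref{thm:pathValid} and~\ref{thm:validInequalities}, there is no real obstacle in the argument itself; it is essentially a two-line calculation once the valid inequality is in hand. The matching upper bound (to certify tightness) is to be handled separately by an explicit example, presumably with $v_1$ heavily concentrated on a single (late) item and $v_2$ chosen so that the unique surviving equilibrium path allocates nearly all of the $T$ items to buyer~$2$.
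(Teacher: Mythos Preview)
Your argument is correct and is essentially the paper's proof: the paper exhibits the dual solution $\sigma_T=\mu_0=1/T$ (all other variables zero), and your chain of inequalities is exactly what that dual certificate says once you unwind the LP, namely $\tfrac{1}{T}\cdot(\text{normalization}) + \tfrac{1}{T}\cdot(\text{valid inequality at }\ell=0)\le \Gamma(P)$. The only place you are slightly less careful than the paper is the reduction step: Corollary~\ref{cor:pathEfficiencyBound} by itself only tells you what happens when $\Gamma(P)<\Gamma(P^{T-1})$, so to justify ``reduce to the case where the unique optimal allocation is $(T,0)$'' you should either invoke the informal remark following that corollary or, as the paper's proof does, make it precise by induction on $T$ (if $\Gamma(P)\ge\Gamma(P^{T-1})$ then the induction hypothesis gives $\Gamma(P)\ge 1/(T-1)>1/T$).
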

\begin{proof}
By induction on $T$. The bound holds for the base case $T = 1$ because single-item
second-price auctions have full efficiency. Now consider $T > 1$ and suppose the valuations are 
such that there exists an equilibrium path $P$ from $\0$ with $\Gamma(P) \leq 1/T$. Note it cannot be 
that $\Gamma(P) \geq \Gamma(P^{T-1})$; otherwise, by the induction hypothesis, $\Gamma(P) \geq 1/(T-1)$. 
Therefore, by Corollary~\ref{cor:pathEfficiencyBound}, we may assume that the unique optimal allocation is $(T,0)$
and that buyer~$2$ wins $T-k > 0$ items on the equilibrium path $P$. 

To lower bound $\Gamma(P)$, we add the valid inequalities~(\ref{eqn:valid}) to the linear program (\ref{LP:unrestricted}).
This gives: 
\begin{alignat}{2}
\text{minimize}\quad \sum_{j = 1}^{k} v_1(j) + \sum_{j = 1}^{T-k} v_2(j) &&&\\ 
\text{subject to}\ \quad\quad\quad\quad\quad\quad \sum_{j = 1}^T v_1(j) &= 1 &&\nonumber \\
\sum_{j = 1}^l v_1(j) + \sum_{j = 1}^{T-l} v_2(j) &\leq 1 &\quad\quad&\forall\, 0 \leq l < T \nonumber \\
 \sum_{i = \ell+1}^{T-k} (T-i+1) \cdot v_2(i) - \sum_{i = k + 1}^{T-\ell} (T-i-\ell+1) \cdot v_1(i) &\geq 0 &\quad\quad& \forall \ 0 \leq \ell < T-k \nonumber\\
v_i(j) &\geq 0 							&\quad\quad&\forall i \in \buyset, j \in [T] \nonumber 
\end{alignat}
Again to lower bound this primal LP we consider its dual LP.
We assign a dual variable $\sigma_l$ to the welfare constraint for when buyer~$1$ wins $l$ items (for $0 \leq l \leq T$).
We have a dual variable $\mu_\ell$ for the valid inequalities of type~(\ref{eqn:valid}) for $0 \leq \ell < T-k$.
The dual linear program is then:
\begin{alignat*}{2}
\text{maximize}\quad\quad  \sigma_T - \sum_{l = 0}^{T-1} \sigma_l &&& \ \\
\text{subject to}\quad\quad\quad\quad \ \sum_{l = i}^T \sigma_l &\leq 1 &\quad\quad &\forall \ 1 \leq i \leq k \\
\sum_{l = i}^T \sigma_l - \sum_{\ell = 0}^{T-i} (T-i-\ell+1)\cdot \mu_\ell &\leq  0 &\quad\quad &\forall \ k+1 \leq i \leq T \\
\sum_{l = 0}^{T-i} \sigma_l + \sum_{\ell = 0}^{i-1} (T-i+1)\cdot \mu_\ell &\leq 1 &\quad\quad &\forall \ 1 \leq i \leq T-k \\
 \sum_{l = 0}^{T-i} \sigma_l &\leq 0 &\quad\quad &\forall \ T-k+1 \leq i \leq T \\
\sigma_T &\in \mathbb{R} & &\\
\sigma_l &\leq 0  &\quad\quad &\forall \ 0 \leq l < T \\
\mu_{\ell} &\geq 0 &\quad\quad&\forall \ 0 \leq \ell < T-k 
\end{alignat*}
Now consider setting $\sigma_T = \mu_0 = 1/T$ and all other variables $0$. It is easy to verify that this is dual feasible 
and has objective value $1/T$. This implies that $\Gamma(P) \geq 1/T$ as desired.
\end{proof}

\begin{theorem}\label{thm:gen-lower}
There exists a $2$-buyer sequential auction with the following properties: both buyers have non-decreasing valuation functions, 
the allocation $(T,0)$ maximizes social welfare and there is an equilibrium path $P$ from $\0$ with:
$$\Gamma(P) = \frac{1}{T}$$
\end{theorem}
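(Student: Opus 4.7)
The plan is to exhibit an explicit $2$-buyer sequential auction whose equilibrium attains efficiency $\frac{1}{T}$. The construction is guided by the dual solution $(\sigma_T, \mu_0) = (\frac{1}{T}, \frac{1}{T})$ used in Theorem~\ref{thm:gen-upper}: complementary slackness with this dual forces the primal valuations to have highly restricted support, pointing to the candidate
$$v_1(j) = \begin{cases} 0 & 1 \le j \le T-1, \\ 1 & j = T, \end{cases} \qquad v_2(j) = \begin{cases} 1/T & j = 1, \\ 0 & 2 \le j \le T, \end{cases}$$
both of which are non-negative and therefore give rise to non-decreasing valuation functions, as required.

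First I would verify the welfare claim. Since $V_1(k) = \mathbf{1}[k = T]$ and $V_2(k) = (1/T)\cdot\mathbf{1}[k \ge 1]$, the social welfare $V_1(k) + V_2(T-k)$ equals $1$ when $k = T$ and is at most $1/T$ for every $k < T$. Hence $(T,0)$ is the unique optimal allocation and $\opt(\0) = 1$.

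Next I would compute forward utilities by backward induction on $t(\x) = T - x_1 - x_2$. The key structural observation is that from any decision node $(x_1, x_2)$ with $x_2 \ge 1$, buyer~$1$ can acquire at most $T - x_1 - x_2$ further items and therefore ends any continuation with strictly fewer than $T$ total items, so $V_1 = 0$ at every reachable terminal. By Lemma~\ref{lem:utilityUpperBounds} this forces $u_1 \equiv 0$ throughout the region $x_2 \ge 1$; a short induction using~(\ref{eqn:bids}) and~(\ref{eqn:tieUtilities}) then yields $u_2 \equiv 0$ there as well, and both bids are identically zero. Along the spine $x_2 = 0$, an induction on $i$ shows $u_1(T-i, 0) = 1 - i/T$ for $1 \le i \le T$: the base case $i=1$ is the single-item auction $b_1 = 1$, $b_2 = 1/T$ at $(T-1,0)$, and the inductive step uses $b_1(T-i-1,0) = u_1(T-i,0)$ and $b_2(T-i-1,0) = v_2(1) = 1/T$. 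In particular, at the root one obtains $b_1(\0) = u_1(1,0) = 1/T = v_2(1) = b_2(\0)$.

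Finally, the path $P = (\0, (0,1), (0,2), \ldots, (0,T))$ satisfies $b_2(\x^k) = b_1(\x^k)$ at every internal node (the root tie at value $1/T$, and a tie at value $0$ at each $(0,j)$ with $1 \le j < T$), so $P$ meets the equilibrium-path definition with buyer~$2$ winning each item. Its social welfare is $V_2(T) = 1/T$, whence $\Gamma(P) = (1/T)/1 = 1/T$. The only delicate step is the spine induction, but because the forward utilities and bids are uniquely determined regardless of how ties are resolved (the remark following~(\ref{eqn:tieUtilities})), the computations proceed unambiguously.
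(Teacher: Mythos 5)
Your proposal is correct and uses exactly the same construction as the paper: the identical valuation profile, the same forward utilities ($u_1(x_1,0)=x_1/T$, zero elsewhere, and $u_2\equiv 0$), and the same root tie at $1/T$ broken in favour of buyer~$2$. The only difference is that you spell out the backward-induction details (and the complementary-slackness motivation) that the paper leaves implicit, and you send all $T$ items to buyer~$2$ rather than just one; both yield welfare $1/T$ and hence efficiency $1/T$.
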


\begin{proof}
Consider a sequential auction with the following valuations profiles:
\begin{align*}
v_1(j) & = \begin{cases}
0 & j < T \\
1 & j = T
\end{cases} \\
v_2(j) & = \begin{cases}
1/T & j = 1 \\
0 & j > 1
\end{cases}
\end{align*}

With the given valuation profile, the optimal allocation is $(T,0)$ with a welfare of $1$, while any other allocation has social 
welfare $1/T$. Solving for forward utilities by backwards induction yields, for any decision node $\x=(x_1, x_2)$:
\begin{align*}
u_1(x_1,x_2) & = \begin{cases}
\frac{x_1}{T} & x_2 = 0 \\
0 & x_2 > 0
\end{cases} \\
u_2(x_1,x_2) & = 0
\end{align*}
In particular, $b_1(\0) = b_2(\0) = 1/T$, so buyer~$1$ and $2$ tie at decision node $\0$. Then by breaking the tie in favour of 
buyer~$2$, there exists an equilibrium path from $\0$ which awards at least one item to buyer~$2$, attaining an efficiency of $1/T$.
\end{proof}

\begin{theorem}\label{thm:unrestrictedLP}
The price of anarchy for $2$-buyer sequential auctions with non-decreasing valuations is exactly~$1/T$.~\qed
\end{theorem}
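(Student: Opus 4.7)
The plan is to simply combine the two preceding theorems, since the statement is a direct consequence of the matching upper and lower bounds we have already established. Recall that the price of anarchy is the infimum of $\Gamma(P)$ over equilibrium paths $P$ in two-buyer sequential multiunit auctions with non-decreasing valuations. To pin it down, I need both a universal lower bound of $1/T$ and a family of instances where that bound is attained (or approached).

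First, I would invoke Theorem~\ref{thm:gen-upper}, which asserts that \emph{every} equilibrium path $P$ in a two-buyer sequential auction with $T$ items and non-decreasing valuations satisfies $\Gamma(P) \geq 1/T$. Taking the infimum over all such instances and paths, the price of anarchy is at least $1/T$. This takes care of the lower bound direction with no extra work.

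Next, I would invoke Theorem~\ref{thm:gen-lower}, which constructs an explicit two-buyer auction with non-decreasing valuations in which $(T,0)$ is optimal yet there exists an equilibrium path $P$ from $\0$ with $\Gamma(P) = 1/T$. Since this is a valid witness in the class we consider, the price of anarchy is at most $1/T$.

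Combining the two inequalities gives exactly $1/T$, completing the proof. There is no real obstacle here: all the technical content, both the LP-duality argument for the lower bound and the gadget construction for the upper bound, has already been carried out in Theorems~\ref{thm:gen-upper} and~\ref{thm:gen-lower}. Theorem~\ref{thm:unrestrictedLP} is merely the summary statement that records their juxtaposition as the exact price of anarchy.
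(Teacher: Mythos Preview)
Your proposal is correct and matches the paper's approach exactly: the theorem is stated with an immediate \qed, as it is simply the juxtaposition of Theorem~\ref{thm:gen-upper} (the universal $1/T$ lower bound) and Theorem~\ref{thm:gen-lower} (the matching construction). There is nothing to add.
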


\end{document}